\documentclass[11pt]{article}
\usepackage[a4paper,margin=25mm]{geometry}
\usepackage[T1]{fontenc}
\usepackage{lmodern,microtype}
\usepackage{amsmath,amssymb,amsthm,mathtools}
\usepackage{booktabs,tabularx,array}
\usepackage{algorithm,algpseudocode}
\usepackage{placeins,needspace}
\usepackage[round,authoryear]{natbib}
\usepackage{xcolor,hyperref}
\usepackage[capitalise,nameinlink,noabbrev]{cleveref}
\usepackage{aliascnt}
\hypersetup{colorlinks=true,linkcolor=blue!40!black,citecolor=blue!40!black,urlcolor=blue!40!black,pdftitle={Improved Approximations for Vehicle Routing with Nonuniform Speeds}}
\newtheorem{theorem}{Theorem}[section]
\newaliascnt{lemma}{theorem}
\newtheorem{lemma}[lemma]{Lemma}
\aliascntresetthe{lemma}
\crefname{lemma}{Lemma}{Lemmas}
\Crefname{lemma}{Lemma}{Lemmas}
\theoremstyle{definition}
\newaliascnt{remark}{theorem}

\aliascntresetthe{remark}
\crefname{remark}{Remark}{Remarks}
\Crefname{remark}{Remark}{Remarks}
\newcommand{\OPT}{\mathrm{OPT}}
\newcommand{\ALG}{\mathrm{ALG}}
\newcommand{\TSP}{\mathrm{TSP}}
\newcommand{\len}{\ell}
\DeclareMathOperator{\odd}{odd}
\title{Improved Approximations for Vehicle Routing\\with Nonuniform Speeds}
\author{%
Hong Li\\
\small School of Mathematics and Statistics, Yunnan University\\
\small \texttt{honglimath@126.com}
}
\date{}
\begin{document}

\maketitle

\begin{abstract}
We study vehicle routing with vehicles of different speeds on a complete undirected graph whose vertex set consists of a depot and a set of clients, where the distances satisfy the triangle inequality. Each vehicle has a specified speed, and if the total length traveled by a vehicle of speed $s$ is $L$, its completion time is $L/s$. In the heterogeneous traveling salesman problem (HetTSP), each vehicle executes one tour starting and ending at the depot, and the tours collectively visit all clients. The objective is to minimize the maximum completion time among the vehicles. We give a $6$-approximation algorithm for HetTSP, improving the previous $90(1+\delta)$-approximation for any fixed $\delta>0$.

We also consider two versions of the heterogeneous capacitated vehicle routing problem (HetCVRP). Each client has a demand, and the vehicles have identical capacities. A vehicle may execute several tours, each starting and ending at the depot, and reload at the depot between consecutive tours; the total demand delivered on each tour cannot exceed the vehicle capacity. In the split-delivery version of HetCVRP, the demand of a client may be divided among multiple visits, possibly by different vehicles. We give a $\frac92+2\sqrt3<7.965$-approximation algorithm for this problem. In the unsplit-delivery version of HetCVRP, the entire demand of each client must be delivered in a single visit. We give a $\frac{11}{2}+3\sqrt2<9.743$-approximation algorithm, improving the previous $450(1+\delta)$-approximation for any fixed $\delta>0$.
\end{abstract}

\section{Introduction}
\subsection{Model}\label{sec:model}

We study the following vehicle routing problems in the formulations used by \citet{GMNR}. Several vehicles, possibly with different speeds, operate from a common depot, and the objective is to minimize the maximum completion time among the vehicles. We first specify the graph, tours, vehicle routes, and makespan objective, and then define the uncapacitated problem and the two capacitated versions studied in this paper.

\paragraph{Graph, walks, tours, routes, and makespan.}
Let $G=(V,E)$ be a complete undirected graph with metric distances $d:V\times V\to\mathbb{R}_{\ge0}$. Thus $d(u,v)$ denotes the distance between vertices $u$ and $v$, and the distances satisfy the triangle inequality. A distinguished vertex $r\in V$ is the common depot, and $V^\circ=V\setminus\{r\}$ is the set of clients. There are $k$ vehicles with positive speeds $s_1,\ldots,s_k$.

A walk is a finite sequence of vertices, and its length is the sum of the distances between consecutive vertices, counting repeated traversals with multiplicity. A walk is closed if it starts and ends at the same vertex. We also allow the empty walk at a vertex, of length zero. A tour is either the empty walk at $r$ or a closed walk $r,v_1,\ldots,v_m,r$ with $m\ge1$, where $v_1,\ldots,v_m$ are distinct clients. Thus a tour containing a single client $v$ is $r,v,r$. A route of a vehicle is a finite sequence of tours executed by that vehicle. The length of a route is the sum of the lengths of its tours. If vehicle $i$ has a route of length $L_i$, its completion time is $L_i/s_i$. The objective in all problems considered below is to minimize the makespan
\[
\max_{i\in\{1,\ldots,k\}}\frac{L_i}{s_i}.
\]

\paragraph{Heterogeneous traveling salesman problem.}
In the heterogeneous traveling salesman problem (HetTSP), the route of each vehicle consists of a single tour, possibly the empty tour, and the tours of the vehicles must collectively visit every client in $V^\circ$. We denote the minimum possible makespan by $\OPT_{\mathrm{HetTSP}}$.

\paragraph{Capacitated setting.}

In the capacitated setting, every vehicle has the same capacity $Q>0$, and each client $v\in V^\circ$ has a demand $q_v\in(0,Q]$. The route of a vehicle may contain several tours, and the vehicle may reload at the depot between consecutive tours. We call a tour capacity-feasible if the total quantity delivered on it is at most $Q$. All client demands must be satisfied. We consider the following two versions of the heterogeneous capacitated vehicle routing problem (HetCVRP).

\paragraph{Split-delivery HetCVRP.}
In split-delivery HetCVRP, the demand $q_v$ of a client may be divided among several tours, possibly executed by different vehicles. We denote the minimum possible makespan by $\OPT_{\mathrm{SD}}$.

\paragraph{Unsplit-delivery HetCVRP.}
In unsplit-delivery HetCVRP, the entire demand $q_v$ of each client must be delivered in a single visit, and hence on a single tour by a single vehicle. We denote the minimum possible makespan by $\OPT_{\mathrm{UD}}$. Since every unsplit-delivery solution is also feasible for split-delivery HetCVRP, $\OPT_{\mathrm{SD}}\le \OPT_{\mathrm{UD}}$.

We impose $q_v\le Q$ in both versions. For unsplit-delivery HetCVRP, this condition is necessary for feasibility. For split-delivery HetCVRP, it restricts the problem to the same class of instances, although larger demands could otherwise be split among several tours.

\subsection{Related work}

The metric traveling salesman problem (TSP) asks for a minimum-length tour visiting all vertices when the distances satisfy the triangle inequality. A classical extension replaces the single salesman by several salesmen sharing a common depot and minimizes the length of the longest tour. \citet{FHK} studied several $k$-person versions of routing problems, including the $k$-traveling salesman problem ($k$-TSP), where $k$ denotes the number of salesmen, referred to as vehicles in this paper. For $k$-TSP, they proposed the tour-splitting approach: first compute one TSP tour and then divide it into $k$ parts, reconnecting each part to the depot. If TSP admits an $\alpha$-approximation, their analysis gives an $\alpha+1-1/k$ approximation for $k$-TSP, and hence in particular an $\alpha+1$ approximation. This is the equal-speed predecessor of HetTSP.

A second classical extension of TSP is obtained by imposing the capacitated setting described in \cref{sec:model}, leading to the capacitated vehicle routing problem (CVRP). In the single-depot setting relevant here, all vehicles have the same capacity, and the usual objective is to minimize the total length of the tours used to satisfy all client demands. The tour-splitting approach is also fundamental to approximation algorithms for CVRP: the classical algorithms of \citet{HRK,AG87,AG90} first compute a TSP tour and then divide its client order into capacity-feasible tours, reconnecting the resulting parts to the depot as needed. Given an $\alpha$-approximation algorithm for TSP, they obtain an $(\alpha+1)$-approximation for the unit-demand special case and the split-delivery version, and an $(\alpha+2)$-approximation for unsplit-delivery CVRP. For every $\alpha>1$, \citet{BTV} improved these guarantees to $\alpha+1-\varepsilon$ and $\alpha+2(1-\varepsilon)$, respectively, for a constant $\varepsilon>0$ depending on $\alpha$, with $\varepsilon>1/3000$ for $\alpha=3/2$. Taking $\alpha=3/2$, the ratio for unsplit-delivery CVRP was further improved to $3.194$ by \citet{FMR}, to below $3.176$ by \citet{ZX2026}, and to $3.159$ by \citet{EW2026}. \citet{ZX2025} studied the case where the capacity is a fixed constant. When a given number of vehicles operate in parallel, minimizing the maximum route length is also a natural objective; for equal-speed vehicles, this is equivalent to minimizing the makespan in this paper.

\citet{GMNR} studied vehicle routing with nonuniform speeds, including HetTSP and HetCVRP. Their main algorithmic contribution concerns HetTSP. They construct a Level-Prim tree that spans vertices in successive distance ranges from the depot, decompose it into subtours containing the depot, and assign these subtours to vehicles using ideas from scheduling on unrelated machines \citep{LST1990}. Their analysis gives a $90(1+\delta)$-approximation for HetTSP for any fixed $\delta>0$. Their HetCVRP result is obtained by transforming the capacitated instance into a HetTSP instance under a modified distance function. This transformation multiplies the HetTSP approximation factor by $5$, yielding an unsplit-delivery solution of makespan at most $450(1+\delta)\OPT_{\mathrm{SD}}$, and hence at most $450(1+\delta)\OPT_{\mathrm{UD}}$. Since this solution is also feasible for split-delivery HetCVRP, the same factor also gives a $450(1+\delta)$-approximation for that version.

Related approximation results have also been obtained for the heterogeneous Chinese postman problem \citep{LPLCWL2023}, parcel delivery using drones \citep{CGMR2024}, heterogeneous rooted tree/path cover problems \citep{PLYL2025}, and the min--max heterogeneous weighted delivery problem \citep{LYLX2026}.

\subsection{Our results}\label{sec:results}

We give approximation algorithms for HetTSP and the two versions of HetCVRP defined in \cref{sec:model}. For each algorithm, let $\ALG$ denote the makespan of its output solution.

\begin{theorem}[HetTSP]\label{thm:hettsp}
There is a polynomial-time algorithm for HetTSP that returns a feasible solution with $\ALG\le6\OPT_{\mathrm{HetTSP}}$.
\end{theorem}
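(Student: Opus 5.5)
We would guess the makespan $T$ up to a factor $(1+\varepsilon)$, or better, try all $O(n^2k)$ critical values $2d(r,v)/s_i$ together with a final binary search, so that the algorithm can assume it knows $T\ge\OPT_{\mathrm{HetTSP}}$ and must either produce a solution of makespan at most $6T$ or certify $T<\OPT$. Sort the vehicles so that $s_1\ge s_2\ge\cdots\ge s_k$. Vehicle $i$ can only serve clients $v$ with $2d(r,v)\le s_iT$, that is, with $d(r,v)\le s_iT/2$. This suggests a nested structure. Define $U_i=\{v: d(r,v)\le s_iT/2\}$. Then $U_k\subseteq\cdots\subseteq U_1$, every client must lie in $U_1$, and in the optimum every client of $V^\circ\setminus U_{j+1}$ is served by one of the vehicles $1,\ldots,j$.

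The lower bound we would use is a minimum-spanning-forest style bound. For each $j$, the optimal tours of vehicles $1,\ldots,j$ together with the depot form a connected graph spanning $V^\circ\setminus U_{j+1}$. Its total length is at most $T\sum_{i\le j}s_i$. Hence the minimum cost of a tree, rooted at $r$, spanning those clients is at most $T\sum_{i\le j}s_i$, and this holds for every $j$. The algorithm would build one spanning tree greedily by levels, in Prim/Kruskal style. We contract $r$ together with $U_{j+1}$'s region appropriately and grow the tree outward. Then we double it into an Euler tour and split it from the farthest clients inward. Vehicle $1$ (the fastest) takes the first chunk of the tour of length about $2s_1T$, vehicle $2$ takes the next chunk, and so on, in the spirit of Frederickson--Hecht--Kim tour-splitting but with speed-proportional budgets. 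Each chunk is closed by connecting its endpoints to $r$.

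The accounting target is the following. Each vehicle $i$ receives a tour segment of length at most about $2s_iT$ (from doubling the tree). The connection cost is at most $2\max d(r,v)$ over the clients it receives, and we would ensure this is at most $s_iT$ by only giving vehicle $i$ clients in $U_i$ scaled by a constant, say $d(r,v)\le s_iT$. The total is then $O(s_iT)$, and the constants should be tuned to reach $6$: roughly $4$ from the doubled segment budget plus $2$ from the reconnection. To avoid the segment budget exceeding capacity we would allocate greedily. We process the Euler tour from the outermost levels inward, assign to vehicle $i$ a piece of length up to $4s_iT$ (or twice the budget), and use the prefix lower bound to argue that we never run out of vehicles before the clients far from $r$ are covered. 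If the vehicles do run out, we certify $T<\OPT$.

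The main obstacle is the interaction between levels. The tree for $V^\circ\setminus U_{j+1}$ may route through closer vertices, and the Euler-tour pieces are not aligned with the levels. A piece assigned to a slow vehicle might then contain a far client. To handle this we would cut the tree per level, as in a Level-Prim construction with geometric thresholds $s_iT$. We would charge the edges connecting level $j$ to inner levels to the prefix bound, and use the factor-$2$ slack in the eligibility radius (allowing $d(r,v)\le s_iT$ rather than $s_iT/2$) so that one level's overflow is absorbed by the next-faster vehicle. Verifying that this charging yields exactly $6$ rather than a larger constant is the delicate step. I would first try to prove the invariant that after processing vehicles $1,\ldots,j$, all clients outside $U_{j+1}$ are covered, using the prefix bound with total budget $2T\sum_{i\le j}s_i$ against segments of combined length $4T\sum_{i\le j}s_i$ minus reconnection slack.
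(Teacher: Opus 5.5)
Your proposal has a genuine gap, and it is exactly where the constant $6$ has to come from. You never construct a walk whose prefixes are controlled by your prefix lower bounds, and the ``$4+2$'' count is a heuristic rather than an estimate.

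The bound $\mathrm{MST}\bigl((V^\circ\setminus U_{j+1})\cup\{r\}\bigr)\le TS_j$ concerns a tree on the outer clients alone. It says nothing about how much of the Euler tour of one global tree is needed to cover those clients, because that tour's order is not monotone in $d(r,\cdot)$. ``Splitting from the farthest clients inward'' is therefore undefined without a reordering whose cost you do not control.

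If you instead concatenate separate tours for the nested levels $U_\ell\setminus U_{\ell+1}$, the prefix through level $j$ is a sum of level tours, each bounded only by $TS_\ell$, and this sum need not be $O(TS_j)$. Take $T=1$, $k=j$ vehicles with speeds $s_i=1-i\varepsilon$ for small $\varepsilon$, and a tree metric in which $r$ is joined to $j$ disjoint paths. Path $i$ has length $(1-i\varepsilon)/2$ and carries clients at depths $(1-\ell\varepsilon)/2$ for $\ell=i,\ldots,j$. Vehicle $i$ sweeps path $i$ in time $1$, so the optimum is at most $1$. However, the clients of level $\ell$ lie one on each of paths $1,\ldots,\ell$. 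The level tours therefore have lengths about $1,2,\ldots,j$, with total about $j^2/2$, against total speed $S_j\approx j$. Any division of this walk along its order has makespan $\Omega(j)$.

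The remedy you gesture at, a Level-Prim decomposition with geometric thresholds plus a charging scheme, is the route of the earlier $90(1+\delta)$ analysis. Nothing in your sketch shows that such a charging gives $6$, and you say yourself that this step is unverified.

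The paper closes the gap with three ingredients your sketch lacks.

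First, the vehicles are grouped into consecutive blocks $[a_t,b_t]$, built from the slowest end by a rule on cumulative speeds $S_j$. One tour is built for the layer $A_t$ of clients whose slowest reachable vehicle lies in block $t$. The minimality of $a_t$ in that rule shows that the bounds on all earlier layer tours, plus their reconnection contributions, total at most $3BS_{a_t-1}$. This is precisely the telescoping that fails in the example above.

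Second, each layer tour is the Christofides--Serdjukov tour on $A_t\cup\{r\}$. Deleting one depot edge from each optimal tour restricted to $A_t$ costs at least $\tau s_{b_t+1}/2$ per tour, which yields the depot saving $\len(T_t)\le\frac32BS_{b_t}-\frac\tau2b_ts_{b_t+1}$.

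Third, since the blocks are defined by cumulative speeds, speeds inside a block can differ by arbitrary factors. So no factor-$2$ radius slack lets each vehicle reach its own clients, and reconnections cannot be charged per vehicle to its own speed. Instead, the concatenation $T_1\circ\cdots\circ T_m$ is split greedily with a common budget $Ms_j$ that includes both reconnections. The failed-extension telescoping then shows that $A_t$ is exhausted by vehicle $a_t$, the fastest vehicle of its block. Reconnection costs are charged globally through $\tau\Gamma_t$ and partly offset by the depot savings.

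Finally, the grouping rule applied at $a_t$ bounds the current layer's tour bound by $3BS_{a_t}$. The total prefix plus reconnections is therefore at most $3BS_{a_t-1}+3BS_{a_t}\le6BS_{a_t}$, which gives the factor $6$.
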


\begin{theorem}[Split-delivery HetCVRP]\label{thm:sd}
There is a polynomial-time algorithm for split-delivery HetCVRP that returns a feasible solution with $\ALG\le(\frac92+2\sqrt3)\OPT_{\mathrm{SD}}$.
\end{theorem}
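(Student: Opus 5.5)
The plan is to combine the HetTSP machinery behind \cref{thm:hettsp} with the classical tour-splitting (iterated tour partitioning) argument of \citet{HRK,AG87}, adapted to split deliveries. We guess the optimum makespan $T$ by binary search over the polynomially many candidate values of the form $L/s_i$. We then aim to build, for each vehicle $i$, a route of length at most $c\, s_i T$, where $c=\frac92+2\sqrt3$.

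\textbf{Lower bounds.} For a fixed optimal split-delivery solution, let $D_i\subseteq V^\circ$ be the clients that vehicle $i$ visits, and let $x_{iv}$ be the quantity it delivers to $v$. We will use three consequences of optimality.
\begin{itemize}
\item[(a)] Reach: $2d(r,v)\le s_iT$ for every $v\in D_i$.
\item[(b)] Connectivity: the union of vehicle $i$'s tours is a connected subgraph through $r$ spanning $D_i$, of length at most $s_iT$.
\item[(c)] Radial bound: every tour through $v$ has length at least $2d(r,v)$. Charging that tour's length to its deliveries in proportion to their size gives $\sum_v \frac{2}{Q}x_{iv}d(r,v)\le s_iT$.
\end{itemize}
Bounds (a) and (b) are exactly the structure that the HetTSP analysis exploits. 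Bound (c) is the extra capacitated ingredient.

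\textbf{Step 1: speed classes and a fractional assignment.} We introduce a parameter $\beta>1$ and round the speeds down to powers of $\beta$ into classes. We then write an LP that assigns demand fractionally to classes. Its constraints are that each client's demand is fully covered, the reach condition (a) holds, and each class satisfies a capacity-weighted radial constraint in the form of (c). Finally we round the LP, as in the unrelated-machines approach of \citet{LST1990} used by \citet{GMNR}, so that each vehicle's radial load exceeds $s_iT$ by at most one extra client of radius at most $s_iT/2$. Split delivery is what makes the rounding cheap: fractional amounts are allowed, and only the number of visited clients per vehicle must be controlled.

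\textbf{Step 2: routing within each vehicle.} For each vehicle, we build a tour on its assigned clients using the HetTSP construction. The charging in that proof should bound the tour length by a constant times $s_iT$, where the constant depends on $\beta$. We then split the tour by iterated tour partitioning into capacity-feasible tours for the delivered amounts. This costs at most the tour length plus $\frac{2}{Q}\sum_v x_{iv}d(r,v)$. With a random offset, this bound holds in expectation, and it can be derandomized by trying all offsets.

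\textbf{Step 3: balancing the parameter.} Adding the three contributions (tour length, radial term, rounding overflow) should give a bound of the form $a+b\beta+\frac{c'}{\beta-1}$ or similar. Minimizing over $\beta$ produces a term $2\sqrt{bc'}$. The target $\frac92+2\sqrt3$ suggests constants with $bc'=3$ and additive part $\frac92$, for instance from $\frac32+\frac{3}{2}\cdot\ldots$ pieces.

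\textbf{Main obstacle.} The hardest step is Step 2's combination. The HetTSP tour for a vehicle is built from a shared tree across classes, as in the Level-Prim approach, so its length is charged against the faster vehicles' optimal connectivity. The radial bound, however, is per vehicle. Making both charges land on the same vehicle without double counting is delicate. I would first try to handle this by charging tree edges by class, with the geometric sum over slower classes producing the $\beta$-dependence, and by assigning each class's demand only to vehicles of that class or faster ones.
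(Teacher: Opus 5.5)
\paragraph{Verdict.} Your proposal has genuine gaps, and they sit exactly in the steps that would carry the argument.

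\paragraph{Where it fails.} The LP in Step 1 encodes only reach and the capacity-weighted radial load. It therefore says nothing about the connectivity of the clients placed on a vehicle. Take many clients of tiny demand at distance $s_iT/2$ from $r$ and pairwise distance $s_iT$. They satisfy (a) and the radial constraint for a single vehicle, yet their $\TSP$ value exceeds $s_iT$ by an arbitrary factor. So Step 2 cannot bound a per-vehicle tour by a constant times $s_iT$. There is no per-vehicle HetTSP construction that does this for an arbitrary assigned client set. This is why \citet{GMNR} assign depot-rooted subtours of a Level-Prim tree rather than clients, and that is where their large constants come from.

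Step 3 is reverse-engineered. You posit the form $a+b\beta+c'/(\beta-1)$ and pick $bc'=3$ and $a$ to reproduce the target, rather than deriving it. Rounding speeds to powers of $\beta$ and LST-type rounding cost multiplicative factors that your accounting never tracks. Finally, the optimum makespan is not one of polynomially many values $L/s_i$ unless $L$ is specified. An LP with right-hand sides $s_iT$ also needs $T\ge\OPT_{\mathrm{SD}}$, so you would at least incur a $1+\delta$ loss.

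\paragraph{What the paper does instead.} It avoids the LP and speed rounding entirely.
\begin{itemize}
\item It enumerates only $\Theta=\{2d(r,v)/s_i\}$. In the analysis it takes the largest $\tau\in\Theta$ with $\tau\le B=\OPT_{\mathrm{SD}}$, so every client has the same slowest reachable vehicle at $\tau$ and at $B$.
\item The ordered vehicles are split into consecutive groups $[a_t,b_t]$, and $A_t$ is the set of clients whose slowest reachable vehicle lies in group $t$. Only vehicles $1,\ldots,b_t$ can serve $A_t$ in the optimum. This gives $\TSP(A_t)\le BS_{b_t}$, and summing your bound (c) over those vehicles gives $\sum_{j\le t}R_Q(A_j)\le BS_{b_t}$.
\item One Christofides tour is built per layer. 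Shortcut the optimal routes to at most $b_t$ tours covering $A_t$, and delete one depot edge, each of length more than $\tau s_{b_t+1}/2$, from each. This yields the depot saving $\len(T_t)\le\frac32BS_{b_t}-\frac\tau2b_ts_{b_t+1}$.
\item Each layer tour is split by tour partitioning at additive cost $R_Q(A_t)$, and the layer walks are concatenated.
\item Consecutive pieces are allocated greedily from the fastest to the slowest vehicle, adding direct depot connections. Their extra length through vehicle $a_t$ is at most $\tau\Gamma_t$. The permitted makespan $M$ is computed from the constructed walks, so no exact guess of the optimum is needed.
\end{itemize}
With the groups chosen by the paper's rule with $K=2+\sqrt3$, one gets $P_t+\tau\Gamma_t\le B\bigl(\frac92+(K-2)+\frac3{K-2}\bigr)S_{a_t}$. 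So the $u+3/u$ shape you guessed does appear, but it comes from the group-formation parameter $u=K-2$, not from speed rounding.
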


\begin{theorem}[Unsplit-delivery HetCVRP]\label{thm:ud}
There is a polynomial-time algorithm for unsplit-delivery HetCVRP that returns a feasible solution with $\ALG\le(\frac{11}{2}+3\sqrt2)\OPT_{\mathrm{UD}}$.
\end{theorem}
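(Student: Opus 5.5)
The plan is to reuse the framework behind \cref{thm:sd} and pay an extra additive term for unsplittable demands, in the same way that unsplit-delivery CVRP costs $\alpha+2$ instead of $\alpha+1$ in the classical tour-splitting analysis. Concretely, we guess a target makespan $T$ by binary search (a standard $(1+\varepsilon)$ rounding or a polynomial set of candidate values determined by distances and speeds), so vehicle $i$ may travel total length $s_iT$. We assume $T\ge\OPT_{\mathrm{UD}}\ (\ge\OPT_{\mathrm{SD}})$. Two families of lower bounds are then available for any solution with makespan at most $T$.
\begin{itemize}
\item \emph{Reachability.} Any client $v$ served by vehicle $i$ satisfies $2d(r,v)\le s_iT$.
\item \emph{Load bounds.} Combining the radial bound $\sum_v 2d(r,v)q_v/Q$ with the HetTSP tree bound (the structure underlying \cref{thm:hettsp}), restricted to the clients that vehicles of speed at least a given threshold must cover, gives the following. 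For every speed threshold $\sigma$, the clients with $2d(r,v)>\sigma T$ have radial load and tree weight at most $T\sum_{i:s_i>\sigma}s_i$.
\end{itemize}

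The algorithm will have a free parameter $\lambda>0$. First, run the HetTSP-style construction from the proof of \cref{thm:hettsp} to assign to each vehicle $i$ a set of clients spanned by a depot-rooted tour $C_i$. The intent is that $\len(C_i)$ is bounded by a constant times $s_iT$ and that only clients reachable by vehicle $i$, up to a factor depending on $\lambda$, are assigned to it. This assignment is done in a load-balancing fashion, as in scheduling on unrelated or uniform machines, so that the assigned radial load $\sum_{v\in C_i}2d(r,v)q_v/Q$ is also at most a constant times $s_iT$. Next, each $C_i$ is split into capacity-feasible tours by iterated tour partitioning. In the unsplit version we use the Altinkemer--Gavish variant: cut the tour at client boundaries whenever the accumulated demand would exceed $Q$. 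This yields route length at most $\len(C_i)+2\cdot\bigl(\text{radial load of } C_i\bigr)+\max_{v\in C_i}2d(r,v)$. The last term is the extra cost of unsplittable deliveries, and reachability bounds it by a $\lambda$-dependent multiple of $s_iT$.

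The final accounting has the form $\ALG/T\le a+b\lambda+c/\lambda$. One term grows when we admit clients that are too far for the assigned vehicle (more radial/extra cost per vehicle). The other grows when we restrict assignments (more load concentrated on fast vehicles, via the geometric grouping of speeds). I expect $a=11/2$ and $bc=9/2$, which comes from the split-delivery constants plus the unit of radial cost lost to unsplittable cuts. Optimizing at $\lambda=\sqrt{c/b}$ then gives $\frac{11}{2}+2\sqrt{9/2}=\frac{11}{2}+3\sqrt2$.

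The main obstacle is the simultaneous balancing step. We must assign clients so that each vehicle's tree weight, radial load, and unsplit overhead are all proportional to its speed. This needs a rounding argument (LST-style or a greedy by decreasing distance) that keeps every client on a vehicle fast enough to reach it within the $\lambda$ slack. I would first try a greedy scheme: process the speed classes from fastest to slowest and give each vehicle the farthest remaining clients, using the threshold lower bounds to charge overflow. Only if that fails to give the right constants would I fall back to an LP with reachability constraints.
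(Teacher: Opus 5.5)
There is a genuine gap: the step that carries the whole guarantee is never supplied. You need an assignment of clients to vehicles under which each vehicle's tour length, radial load and unsplit overhead are all within fixed multiples of $s_iT$. You leave this as an intention: a ``load-balancing'' step, a greedy scheme to try, and an LP as fallback. The ``HetTSP-style construction from the proof of \cref{thm:hettsp}'' you invoke is never described. The values $a=\frac{11}{2}$ and $bc=\frac92$ are read off from the target $\frac{11}{2}+3\sqrt2$ rather than derived, so nothing in your plan forces them. A scheduling-style rounding of a per-vehicle assignment is the route of the earlier $90(1+\delta)$ result, and it gives no reason to expect these constants.

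The splitting step is also misstated. Cutting greedily from the start of the tour whenever the next client would overflow does not give $\len(C_i)+2\cdot(\text{radial load})+\max_v2d(r,v)$. Repeat the pattern far--near--far, with all far clients at one point at distance $D$ from $r$ carrying demand $\delta$ each, and the near client carrying $Q-2\delta$. Every greedy cut falls between two far clients and costs about $2D$, while the radial load is negligible for small $\delta$ and a near client close to $r$. The classical bound $\len(T)+2R_Q(X)$ of \cref{lem:partition} needs the offset argument. Its unsplit penalty is the coefficient $2$ instead of $1$ on $R_Q$, not an additive $\max$ term.

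The paper never balances per vehicle. It enumerates $\tau\in\Theta$ so that every client's slowest reachable vehicle is the same at $\tau$ and at $B=\OPT_{\mathrm{SD}}$. It groups the vehicles into consecutive blocks $[a_t,b_t]$ and forms layers $A_t$ with $\tau s_{b_t+1}/2<d(r,v)\le\tau s_{a_t}/2$.

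The lower bound on distances gives the depot saving of \cref{lem:tour}. Deleting one depot edge from each tour derived from the optimal solution for $A_t$ yields a spanning tree, so the Christofides--Serdjukov tour satisfies $\len(T_t)\le\frac32BS_{b_t}-\frac\tau2b_ts_{b_t+1}$.

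Each $T_t$ is split into capacity-feasible tours first, at extra cost $2R_Q(A_t)$, with $\sum_{j\le t}R_Q(A_j)\le BS_{b_t}$. The resulting closed walks are concatenated in layer order. Consecutive subwalks are then handed to the vehicles fastest-first under budgets $Ms_j$. The telescoping argument of \cref{lem:allocation} shows it suffices to bound the prefix quantity $P_t+\tau\Gamma_t$ by $\rho BS_{a_t}$, where $\tau\Gamma_t$ pays for the direct depot connections.

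The group rule in \cref{eq:partition-rule} lets the depot saving offset part of $\tau\Gamma_t$. This gives $H_t\le KS_{a_t-1}$ and $S_{b_t}\le\frac{K-1/2}{K-2}S_{a_t}$, hence $\rho_2(K)=\frac{11}{2}+u+\frac{9}{2u}$ with $u=K-2$. Your free parameter corresponds to $u$, and $u=3/\sqrt2$ gives the theorem. Without the depot saving and the prefix-sum allocation against $S_{a_t}$, your plan has no mechanism that produces these coefficients.
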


The factors in \cref{thm:sd,thm:ud} are less than $7.965$ and $9.743$, respectively. The unsplit-delivery algorithm satisfies the stronger guarantee
\[
\ALG\le\left(\frac{11}{2}+3\sqrt2\right)\OPT_{\mathrm{SD}}
\le\left(\frac{11}{2}+3\sqrt2\right)\OPT_{\mathrm{UD}}.
\]

\begin{table}[htb]
\centering
\caption{Comparison with the approximation guarantees of \citet{GMNR}. For unsplit-delivery HetCVRP, both guarantees also hold relative to $\OPT_{\mathrm{SD}}$.}
\label{tab:comparison}
\begin{tabularx}{\textwidth}{@{}>{\raggedright\arraybackslash}X>{\raggedright\arraybackslash}X>{\raggedright\arraybackslash}X@{}}
\toprule
Problem & Previous guarantee & This paper\\
\midrule
HetTSP & $90(1+\delta)$ & $6$\\
Split-delivery HetCVRP & $450(1+\delta)$ (implied) & $\frac92+2\sqrt3<7.965$\\
Unsplit-delivery HetCVRP & $450(1+\delta)$ & $\frac{11}{2}+3\sqrt2<9.743$\\
\bottomrule
\end{tabularx}
\end{table}

The improvement in the HetTSP guarantee cannot be explained by optimizing the free parameter in the displayed analysis of \citet{GMNR}. Although \citet{GMNR} note that their constants were not optimized, the best choice of this parameter reduces the coefficient multiplying $1+\delta$ only from $90$ to $4(1+\sqrt[3]{6})^3\approx89.4286$.\footnote{Combining Lemmas~2 and~3 of \citet{GMNR}, the coefficient before the factor $1+\delta$ is $F(\mu)=(1+\mu)(4+4\mu)+(6+6/\mu)\max\{6,4+4/\mu\}$ for $\mu\ge1$. Its minimum is attained at $\mu=\sqrt[3]{6}$ and equals $4(1+\sqrt[3]{6})^3$.}

The proofs, including the stronger bound for unsplit-delivery HetCVRP, are given in \cref{sec:main-proofs} after the common analysis of the algorithms.

\subsection{Techniques}

\paragraph{Algorithmic framework.}
Our algorithms for HetTSP and HetCVRP use the same basic framework. We order the vehicles from fastest to slowest and consider a tentative value for the makespan. For a client, we call the walk from the depot to the client and back to the depot its round trip. We then identify the slowest vehicle that can complete this round trip within the tentative makespan, which we call the client's slowest reachable vehicle. Every slower vehicle would require more than the tentative makespan for this round trip. We then divide the ordered vehicles into consecutive groups. For each group, the clients whose slowest reachable vehicle belongs to that group form a layer client set. The layer client sets are pairwise disjoint and form a partition of the clients, ordered from those that require faster vehicles to those that can also be served by slower vehicles.

For each layer client set, we construct a tour through its clients and the depot using the Christofides--Serdjukov algorithm \citep{Christofides,Serdjukov}. We compute a minimum spanning tree on the layer client set together with the depot using the original distances, add a minimum-length perfect matching on its odd-degree vertices, and shortcut an Euler traversal to obtain the tour. For HetCVRP, we also apply the tour-splitting approach to obtain capacity-feasible tours. Thus each layer client set has a closed walk starting and ending at the depot: one tour for HetTSP, or a concatenation of capacity-feasible tours for HetCVRP. We concatenate these closed walks in their order and assign subwalks of the resulting walk to the vehicles from fastest to slowest, proceeding in the order of the walk. Each assigned subwalk is extended as far as possible, provided that adding direct connections from its endpoints to the depot keeps the route within the vehicle's permitted length. This permitted length is its speed times a makespan bound computed from the tentative makespan, the constructed closed walks, and the vehicle groups. This bound may exceed the tentative makespan.

The tentative makespan guides the construction but is not assumed to be feasible. The algorithm enumerates a polynomial number of tentative makespans and partitions into vehicle groups and returns the best resulting solution.

\paragraph{Why the framework improves the guarantees.}
With nonuniform speeds, it is not enough to divide a tour into short parts: a short part may still be too far from the depot for a slow vehicle to serve. The approach of \citet{GMNR} handles this interaction by constructing a Level-Prim tree, decomposing it into subtours containing the depot that satisfy conditions for assignment, and then assigning these subtours through a scheduling argument. Our approach differs in how the clients are organized, how the constructed walks are assigned, and how capacity is incorporated.

The Level-Prim tree of \citet{GMNR} organizes vertices by distance ranges, with compatibility with vehicle speeds enforced through the subsequent decomposition and assignment. Our layer client sets use the slowest reachable vehicle of each client directly. Every client in a layer client set can be reached within the tentative makespan by the first, and hence a fastest, vehicle of its vehicle group, while no vehicle appearing after that group in the speed order can make the same round trip in that time. This gives an upper bound on distances for the final assignment and a lower bound for the tour analysis. For the analysis, we fix an optimal HetTSP solution for HetTSP and an optimal split-delivery HetCVRP solution for both versions of HetCVRP. Among the enumerated tentative makespans, one gives every client the same slowest reachable vehicle as the makespan of this optimal solution. For that value, the clients of a layer client set are served only by vehicles in the corresponding vehicle group or in earlier, faster groups. The routes of the optimal solution can then be shortcut to tours whose client sets are pairwise disjoint and cover the layer client set. Deleting one edge incident to the depot from each nonempty tour produces a spanning tree, and each deleted edge has length at least the common lower bound on the clients' distances from the depot. The minimum spanning tree chosen by our algorithm has no greater length, so its length bound also benefits from these deletions. Adding a minimum-length perfect matching and shortcutting then gives an upper bound of three halves of the available route length minus the depot saving obtained by combining these deletions with the bounds on the optimal vehicle routes. Using only the approximation guarantee of an $\alpha$-approximation algorithm for TSP does not capture this saving, even when using the better-than-$3/2$ metric-TSP algorithm of \citet{KKO}.

The depot saving becomes useful when we concatenate the closed walks for different layer client sets and divide the resulting walk among the vehicles. In this allocation, a vehicle may be assigned a subwalk that spans consecutive layer client sets, while the returns to the depot already present in the closed walks are retained. The lengths of the assigned subwalks are analyzed together with the direct connections from their endpoints to the depot. The vehicle groups are chosen from the slowest end toward the fastest so that the depot savings in the tour bounds offset part of the length of these direct connections. In this way, tour construction and vehicle assignment are analyzed together, without the separate scheduling step used by \citet{GMNR}.

Capacity is handled directly within the same framework. After building the tour for a layer client set, we divide it into capacity-feasible tours using the tour-splitting approach. The additional length is bounded by an additive value depending on the client demands and their distances from the depot. Thus capacity enters the same analysis as an additive term, instead of through the reduction of \citet{GMNR} to HetTSP under a modified metric, which multiplies the approximation factor by $5$.

For HetTSP, the factor $6$ follows by bounding two contributions separately: the tours for earlier layer client sets together with the direct connections introduced by allocation, and the tour for the current layer client set after its depot saving. The vehicle groups are chosen so that each contribution is at most three times the total length available to the vehicles up to and including the first vehicle of the current group. For HetCVRP, the additional length from tour splitting changes this balance, and the vehicle groups are chosen accordingly to obtain the two capacitated approximation factors.

\subsection{Organization}
\Cref{sec:preliminaries} introduces the notation and recalls results from previous work on the tour-splitting approach. \Cref{sec:algorithm} presents the complete algorithm. \Cref{sec:fixed-groups} shows that, for any fixed partition into vehicle groups, the construction produces feasible vehicle routes and establishes the length bounds needed for the approximation analysis. \Cref{sec:analysis-groups} explains how the partition into vehicle groups is chosen and derives the approximation guarantees for HetTSP and the two versions of HetCVRP. \Cref{sec:running-time} analyzes the running time. Appendix~\ref{app:tour-splitting} recalls the classical tour-splitting constructions and records the additional details used by the algorithm and the running-time analysis.

\section{Preliminaries}\label{sec:preliminaries}

\subsection{Notation}
Throughout, $d(u,v)$ denotes the distance between vertices $u$ and $v$, and the edge $uv$ has length $d(u,v)$. For a multigraph or collection of tours $H$, we also write $\len(H)$ for its total length, counting edge occurrences with multiplicity. Concatenating the tours of a route at $r$ gives a closed walk of the same length. For $X\subseteq V^\circ$, let $\TSP(X)$ be the minimum length of a tour starting and ending at $r$ and visiting exactly the clients in $X$, with $\TSP(\varnothing)=0$.

Deleting visits from a walk and replacing the intervening subwalks by direct edges is called shortcutting; by the triangle inequality, it never increases length. In particular, a closed walk starting and ending at $r$ can be shortcut to a tour that starts and ends at $r$ and visits each client appearing in the walk exactly once. When constructing a capacitated solution, we retain the visits to the depot used for reloading and apply shortcutting only within the subwalk between two consecutive visits to the depot. If a client is visited more than once within such a subwalk, its deliveries are combined at one retained visit, preserving the total quantity delivered and capacity feasibility. The retained depot visits therefore divide the resulting closed walk into a sequence of capacity-feasible tours, which forms the vehicle's route.

Assume $d(r,v)>0$ for each client. A client at zero distance from the depot in a pseudometric can be served for free before the construction. If the client set is empty, return a route consisting of the empty tour for each vehicle. We henceforth let $n=|V^\circ|>0$.

We index the vehicles from fastest to slowest, breaking ties according to a fixed order. Vehicle $i$ has speed $s_i$. For later use, define
\begin{equation}\label{eq:speeds}
 s_1\ge\cdots\ge s_k>0,\qquad
 S_0=0,\quad S_i=\sum_{j=1}^i s_j,\quad s_{k+1}=0.
\end{equation}

For an undirected multigraph $H$, let $\odd(H)$ denote its vertices of odd degree. We use $\uplus$ for the union of edge multisets, retaining multiplicities.

\subsection{Classical tour splitting}\label{sec:partition}

For an instance of split-delivery HetCVRP or unsplit-delivery HetCVRP and a client set $X\subseteq V^\circ$, define
\[
R_Q(X)=\frac2Q\sum_{v\in X}q_vd(r,v).
\]

We use the classical tour-splitting results of \citet{HRK,AG87}. In the notation of this paper, we use the following form.

\begin{lemma}[\citealp{HRK,AG87}]\label{lem:partition}
Given a tour $T$ starting and ending at $r$ and visiting exactly the clients in $X$, one can construct a sequence of capacity-feasible tours delivering all demands in $X$ with total length at most
\begin{equation}\label{eq:partition-bound}
 \len(T)+\lambda R_Q(X),
\end{equation}
where $\lambda=1$ for the split-delivery version and $\lambda=2$ for the unsplit-delivery version.
\end{lemma}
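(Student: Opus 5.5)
We prove \cref{lem:partition} with the classical iterated tour partitioning argument. Write $T=r,v_1,\ldots,v_m,r$ and let $D=\sum_{v\in X}q_v$. We treat the two versions separately, and in both we average over a random offset to bound the expected length. The random offset can then be derandomized by enumerating the polynomially many breakpoint configurations, as recorded in Appendix~\ref{app:tour-splitting}.

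\emph{Split delivery ($\lambda=1$).} Lay the demands along the tour as consecutive intervals of $[0,D)$: client $v_j$ occupies $[D_{j-1},D_j)$, where $D_j=\sum_{i\le j}q_{v_i}$. Pick an offset $\theta\in[0,Q)$ uniformly at random and cut $[0,D)$ at the points $\theta+tQ$ for $t\ge0$. Each piece has total quantity at most $Q$ and is a consecutive segment of $T$, where a client whose interval is cut has its demand split between two pieces. Form one tour per piece by traversing that segment of $T$ and connecting its two ends to $r$. The segments together use each edge of $T$ at most once, so they contribute at most $\len(T)$. The extra edges come from cut points, because each cut at a point inside the interval of $v$ adds the two edges $rv$, so the extra length is at most $2d(r,v)$. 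Since the offset is uniform, the probability that the interval of $v$ contains a cut point is at most $q_v/Q$. The expected extra length is therefore at most $\sum_v 2d(r,v)q_v/Q=R_Q(X)$, and some offset achieves the expectation.

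\emph{Unsplit delivery ($\lambda=2$).} Use the same random cut points, but never split a client. If a cut falls inside the interval of $v$, assign $v$ whole to a separate tour $r,v,r$, and close the surrounding pieces at the neighbours of $v$. Then every piece of the tour still carries at most $Q$, because its clients lie strictly between consecutive cuts. Removing $v$ from the segment and reconnecting the loose ends to $r$ costs at most $2d(r,v)$ by the triangle inequality, and the singleton tour costs another $2d(r,v)$. So each cut client costs at most $4d(r,v)$. With cut probability at most $q_v/Q$, the expected extra length is at most $2R_Q(X)$.

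\emph{Main obstacle.} The step needing care is the unsplit accounting, namely checking that the pieces between cuts stay capacity-feasible and that the reconnection charge is at most $2d(r,v)$ per cut client on top of the singleton cost. The reconnection works because shortcutting $v$ out of a segment and attaching the segment ends to $r$ replaces edges incident to $v$ by edges to $r$ or between $v$'s neighbours, each of which is bounded via $r$ by the triangle inequality. If this charging became messy, I would fall back on the \citet{AG87} scheme of cutting at the first $Q$-overflow and reusing the split-delivery bound on a shifted instance.
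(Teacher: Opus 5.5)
Your proposal is correct and follows essentially the same argument as the paper's Appendix~\ref{app:tour-splitting}. Both use a uniformly random offset with cut points spaced by $Q$. In the split version a cut client's demand is shared between the two adjacent tours at extra cost $2d(r,v)$. In the unsplit version a cut client gets a singleton tour, at total extra cost $4d(r,v)$ including the reconnection. Each client is cut with probability $q_v/Q$, and the offset is then derandomized over finitely many classes. The differences are only cosmetic: you scale by $Q$ instead of normalizing the demands, you use half-open intervals instead of excluding offsets that hit an endpoint $\Delta_j$ (a cut exactly at a boundary is still covered by the triangle inequality), and you charge the unsplit reconnection locally at each cut client rather than through the paper's inequality $d(u,r)+d(r,w)\le\len(T[u,w])+2d(r,z)$.
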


The construction partitions the client order of $T$ into capacity-feasible tours. In the unsplit-delivery version, each client is contained in exactly one of the resulting tours and its entire demand is delivered there. In the split-delivery version, the demand of a client may be split between two consecutive resulting tours, so that the client is visited in both tours and the quantities delivered at these visits sum to its demand.

Appendix~\ref{app:tour-splitting} recalls the classical constructions underlying \cref{lem:partition}. We follow the presentation of the same classical constructions in \citet[Theorem~5]{BTV} and record the additional details used in this paper, including those needed for the running-time analysis.

\section{The algorithm}\label{sec:algorithm}
We first describe the construction for a tentative makespan $\tau>0$ and any partition into vehicle groups. We specify below which tentative makespans are retained and which finitely many partitions into vehicle groups are considered by the complete algorithm. In \cref{lem:partition}, $\lambda$ is the coefficient of the additional length introduced by tour splitting, with $\lambda=1$ for split-delivery HetCVRP and $\lambda=2$ for unsplit-delivery HetCVRP. To present the three algorithms uniformly, we also set $\lambda=0$ for HetTSP, where no tour-splitting term is present.

\subsection{Slowest reachable vehicles and layer client sets}\label{sec:reachability}
For a time value $\omega>0$, define
\begin{equation}\label{eq:eligibility}
 e_\omega(v)=\max\{i\in\{1,\ldots,k\}:2d(r,v)\le \omega s_i\},
\end{equation}
with value zero if no vehicle can make this round trip. When $e_\omega(v)>0$, vehicle $e_\omega(v)$ is the last vehicle in the fixed order that can travel from the depot to $v$ and return within time $\omega$. Since the vehicles are ordered by speed, it is a slowest such vehicle; we call it the slowest reachable vehicle of $v$ at $\omega$. A tentative makespan $\tau$ is retained if and only if every client has a slowest reachable vehicle at $\tau$.

Write such a partition as
\begin{equation}\label{eq:vehicle-groups}
 [a_1,b_1],\ldots,[a_m,b_m],\qquad a_1=1,\quad a_{t+1}=b_t+1,\quad b_m=k.
\end{equation}
The vehicle group $[a_t,b_t]$ consists of the vehicles with indices $a_t,a_t+1,\ldots,b_t$. We call vehicles $a_t$ and $b_t$ the first and the last vehicle of the group; they are a fastest and a slowest vehicle of the group, respectively. For this group, let $A_t$ consist of the clients whose slowest reachable vehicle at $\tau$ belongs to it. Equivalently,
\begin{equation}\label{eq:layers}
 A_t=\{v\in V^\circ:a_t\le e_{\tau}(v)\le b_t\}.
\end{equation}
Some layer client sets may be empty; the sets $A_1,\ldots,A_m$ are pairwise disjoint and their union is $V^\circ$. Recall that $s_{k+1}=0$. By the definition of the slowest reachable vehicle, every $v\in A_t$ satisfies
\begin{equation}\label{eq:layer-radius}
 \frac{\tau s_{b_t+1}}2<d(r,v)\le\frac{\tau s_{a_t}}2.
\end{equation}
A layer client set need not be served by the vehicles in its corresponding vehicle group.

\subsection{Constructing a tour for a layer client set}\label{sec:tours}
If $A_t$ is empty, let $T_t$ be the empty tour. Otherwise, compute a minimum spanning tree $F_t$ on the complete graph with vertex set $A_t\cup\{r\}$ using the original distances $d$. Compute a minimum-length perfect matching $J_t$ on $\odd(F_t)$ with respect to $d$, find an Euler traversal of $F_t\uplus J_t$ starting at $r$, and shortcut it to a tour $T_t$ through $A_t$ and the depot. This is the Christofides--Serdjukov algorithm \citep{Christofides,Serdjukov} applied to $A_t\cup\{r\}$.

\subsection{Concatenation and ordered allocation}\label{sec:allocation}
For HetTSP, put $C_t=T_t$. For HetCVRP, apply \cref{lem:partition} to $T_t$ and let $C_t$ be the closed walk obtained by concatenating the resulting capacity-feasible tours at $r$ in their returned order, retaining all depot visits and delivered quantities. It starts and ends at the depot and serves exactly the clients of $A_t$. If $A_t$ is empty, use the empty closed walk. Concatenate the closed walks at the depot as $W=C_1\circ\cdots\circ C_m$.

A service event corresponds to a client visit in $W$ and, in HetCVRP, records the quantity delivered at that visit. For HetTSP and unsplit-delivery HetCVRP, each client has exactly one service event. For split-delivery HetCVRP, a client may have two service events, corresponding to its visits in the capacity-feasible tours produced by \cref{lem:partition}; the associated delivered quantities sum to its demand. The allocation never divides a service event. The service events are ordered according to their occurrences along $W$. For an event $z$, let $v(z)$ be its client, and let $p(z)$ be the length along $W$ from its start to the occurrence of $z$.

Define
\begin{equation}\label{eq:P-Gamma}
 P_t=\sum_{j=1}^t\len(C_j),\qquad \Gamma_1=0,\quad \Gamma_t=\sum_{j=2}^t(a_j-a_{j-1})s_{a_j}.
\end{equation}
Thus $P_t$ is the total length of the closed walks for the first $t$ layer client sets. The role of $\Gamma_t$ in accounting for the direct connections to the depot is established in \cref{lem:allocation}. For the current tentative makespan $\tau$ and partition into vehicle groups, define
\begin{equation}\label{eq:allocation-makespan}
 M=\max\left\{\tau,\ \max_{1\le t\le m}\frac{P_t+\tau\Gamma_t}{S_{a_t}}\right\}.
\end{equation}
This is the makespan bound used for allocation and need not equal the tentative makespan $\tau$.

Process the vehicles in the order $1,\ldots,k$, that is, from the fastest to the slowest. When vehicle $j$ is considered, it begins with the first unassigned service event $z$ in the order along $W$ and chooses the last event $z'$ in this order satisfying
\begin{equation}\label{eq:segment}
 d(r,v(z))+p(z')-p(z)+d(v(z'),r)\le Ms_j.
\end{equation}
All service events from $z$ through $z'$ are assigned to vehicle $j$. The subwalk of $W$ between their occurrences is its assigned subwalk. Adding a direct connection from the depot to $v(z)$ and another from $v(z')$ back to the depot gives a closed walk whose length is bounded by \cref{eq:segment}. For HetTSP, this closed walk is shortcut to a single tour starting and ending at $r$ and visiting exactly the assigned clients; this tour forms the vehicle's route. For both versions of HetCVRP, all intermediate visits to the depot are retained, and each subwalk between two consecutive visits to the depot is a tour. These tours, in their order along the closed walk, form the vehicle's route.

For all three problems, the vehicle's route has no greater length than this closed walk. For fixed $z$, the length of the closed walk is nondecreasing as $z'$ moves forward in the order of the service events, by the triangle inequality. Thus the last event $z'$ satisfying \cref{eq:segment} can be found by scanning.

Each unused vehicle receives the empty tour. \Cref{lem:allocation} proves that all service events are assigned, that the resulting routes form a feasible solution to the corresponding problem, and that this solution has makespan at most $M$.

\subsection{Enumerating tentative makespans and vehicle groups}\label{sec:vehicle-groups}

The tentative makespans are the distinct values in
\begin{equation}\label{eq:critical-times}
 \Theta=\left\{\frac{2d(r,v)}{s_i}:v\in V^\circ,\ i\in\{1,\ldots,k\}\right\}.
\end{equation}
There are at most $nk$ such values. The largest of them is retained: it is at least $2d(r,v)/s_1$ for every client $v$, so every client has a slowest reachable vehicle there. Hence \cref{alg:main} always constructs at least one solution.

The parameter $K$ is used to determine the partition into vehicle groups. We set $K=K_\lambda$, where
\begin{equation}\label{eq:K-values}
 K_0=3,\qquad K_1=2+\sqrt3,\qquad K_2=2+\frac3{\sqrt2}.
\end{equation}
Thus $K=3$ for HetTSP, $K=2+\sqrt3$ for split-delivery HetCVRP, and $K=2+3/\sqrt2$ for unsplit-delivery HetCVRP. These choices are derived in \cref{sec:factors}.

For a fixed value of $K$, an auxiliary parameter $\eta\in[0,1]$ determines a partition into vehicle groups. We first describe the partition for a given $\eta$, and then show that only finitely many values of $\eta$ need to be considered.

The groups are constructed from the slowest vehicles toward the fastest. During this construction, denote the groups in the order in which they are generated by
\[
[\bar a_1,\bar b_1],[\bar a_2,\bar b_2],\ldots.
\]
Set $\bar b_1=k$. Once $\bar b_t$ is fixed, define
\begin{equation}\label{eq:partition-rule}
 \bar a_t=\min\left\{j\in\{1,\ldots,\bar b_t\}:
 \left(K-\frac32\right)S_{\bar b_t}
 +\eta\left(j-\frac{\bar b_t}{2}\right)s_{\bar b_t+1}
 \le KS_j\right\}.
\end{equation}
If $\bar a_t=1$, stop and set $m=t$. Otherwise, set $\bar b_{t+1}=\bar a_t-1$ and continue. The minimum in \cref{eq:partition-rule} is always defined. Indeed, $j=\bar b_t$ satisfies the defining inequality because $\left(K-\frac32\right)S_{\bar b_t}
+\eta\frac{\bar b_t}{2}s_{\bar b_t+1}
\le (K-1)S_{\bar b_t}
<KS_{\bar b_t}$, where $\bar b_t s_{\bar b_t+1}\le S_{\bar b_t}$ and $\eta\le1$.

After the construction stops, reverse the numbering of the groups so that it agrees with the fixed ordering of the vehicles from fastest to slowest. Namely, for $t=1,\ldots,m$, set
\[
a_t=\bar a_{m+1-t},
\qquad
b_t=\bar b_{m+1-t}.
\]
This gives the partition
\[
[a_1,b_1],\ldots,[a_m,b_m]
\]
in \cref{eq:vehicle-groups}. The rule in \cref{eq:partition-rule} is chosen so that, in the analysis, the depot saving in the tour bound for each layer client set offsets part of the contribution of its vehicle group to the direct connections; see \cref{sec:accumulated-analysis}.

Although $\eta$ ranges over the interval $[0,1]$, only finitely many values need to be considered. For indices $1\le j\le b\le k$, consider the inequality in \cref{eq:partition-rule} with $\bar b_t=b$ and candidate index $j$. It is affine in $\eta$, so whether it holds can change only where equality occurs. If $(2j-b)s_{b+1}\ne0$, this equality occurs only at
\begin{equation}\label{eq:breakpoints}
 \eta=\frac{2KS_j-(2K-3)S_b}{(2j-b)s_{b+1}},
 \qquad
 1\le j\le b\le k.
\end{equation}
Retain the values in \cref{eq:breakpoints} that lie in $[0,1]$, and call them breakpoints. If $(2j-b)s_{b+1}=0$, the inequality is independent of $\eta$ and therefore creates no breakpoint.

Consider two values of $\eta$ in the same open interval between consecutive breakpoints. Since $\bar b_1=k$ is fixed, whether each inequality in the definition of $\bar a_1$ holds is the same for the two values of $\eta$, and hence \cref{eq:partition-rule} gives the same $\bar a_1$. Therefore $\bar b_2=\bar a_1-1$ is also the same. Repeating this argument shows that all subsequently generated groups and the stopping value $m$ are the same. Hence the resulting partition into vehicle groups is unchanged throughout the interval.

Let $\mathcal E_K$ contain all breakpoints in $[0,1]$, the endpoints $0$ and $1$, and the midpoint of each interval between two consecutive distinct values among them. For every $\eta\in\mathcal E_K$, perform the construction above and let $\mathcal G_K$ be the family of the resulting partitions into vehicle groups. Since there are $O(k^2)$ breakpoints, both $\mathcal E_K$ and $\mathcal G_K$ have size $O(k^2)$.

For the corresponding value of $K$, the algorithm considers every $\tau\in\Theta$ that is retained by \cref{sec:reachability} together with every partition in $\mathcal G_K$.

\subsection{The complete algorithm}
\Cref{alg:main} presents the three algorithms together using the parameter $\lambda$ introduced at the beginning of \cref{sec:algorithm} and $K=K_\lambda$ from \cref{eq:K-values}. For every retained tentative makespan $\tau\in\Theta$ and every partition of the ordered vehicles into consecutive vehicle groups in $\mathcal G_K$, it constructs one route for each vehicle. Among the resulting solutions, it returns one with the smallest makespan $\ALG$.

\algrenewcommand\algorithmicrequire{\textbf{Input:}}
\algrenewcommand\algorithmicensure{\textbf{Output:}}
\begin{algorithm}[!ht]\small
\caption{Routing with nonuniform speeds}\label{alg:main}
\begin{algorithmic}[1]
\Require Metric distances $d$ on $V$, depot $r$, vehicle speeds $(s_i)$, and $\lambda\in\{0,1,2\}$; demands $(q_v)$ and capacity $Q$ if $\lambda>0$.
\Ensure One route for each vehicle, forming a feasible HetTSP solution if $\lambda=0$, a feasible split-delivery HetCVRP solution if $\lambda=1$, or a feasible unsplit-delivery HetCVRP solution if $\lambda=2$.
\If{$V^\circ=\varnothing$}
 \State \Return the solution in which every vehicle receives the empty tour.
\EndIf
\State Index the vehicles from fastest to slowest as in \cref{eq:speeds}, compute $S_i$, set $s_{k+1}=0$, and use $K=K_\lambda$ from \cref{eq:K-values}.
\State Form $\Theta$ by \cref{eq:critical-times} and $\mathcal E_K$ as in \cref{sec:vehicle-groups}.
\State For each $\eta\in\mathcal E_K$, construct a partition of the ordered vehicles into consecutive vehicle groups as in \cref{sec:vehicle-groups}; collect the resulting partitions in $\mathcal G_K$.
\State Initialize the best solution as unset.
\For{each tentative makespan $\tau\in\Theta$}
 \State Compute $e_{\tau}(v)$, the index of the slowest reachable vehicle, for every client $v$.
 \If{$e_{\tau}(v)=0$ for some client $v$}
  \State Skip the current $\tau$ and continue with the next tentative makespan in $\Theta$.
 \EndIf
 \For{each partition into vehicle groups $([a_t,b_t])_{t=1}^m\in\mathcal G_K$}
  \For{$t=1,\ldots,m$}
   \State Let $A_t$ be the layer client set associated with the vehicle group $[a_t,b_t]$, as defined in \cref{eq:layers}.
   \State Construct the tour $T_t$ through the clients in $A_t$ and the depot as in \cref{sec:tours}.
   \State Construct the closed walk $C_t$ associated with $A_t$ and record its service events as in \cref{sec:allocation}.
  \EndFor
  \State Concatenate $C_1,\ldots,C_m$ at the depot to obtain $W$, and compute $P_t$, $\Gamma_t$, and $M$ by \cref{eq:P-Gamma,eq:allocation-makespan}.
  \State Assign consecutive service events of $W$ to the vehicles using \cref{eq:segment}, and construct one route for each vehicle as in \cref{sec:allocation}.
  \State Keep the resulting solution if no solution has yet been kept or if its makespan is smaller.
 \EndFor
\EndFor
\State \Return the best solution.
\end{algorithmic}
\end{algorithm}
\FloatBarrier

\section{Feasibility and tour bounds}\label{sec:fixed-groups}
For every retained tentative makespan and every partition of the ordered vehicles into consecutive vehicle groups as in \cref{eq:vehicle-groups}, the construction produces a set of vehicle routes; we call the resulting solution a constructed solution. We first show that every constructed solution is feasible.

\begin{lemma}[Feasibility of the ordered allocation]\label{lem:allocation}
With $M$ defined by \cref{eq:allocation-makespan}, every service event for a client in $A_t$ is assigned to one of the vehicles $1,\ldots,a_t$, for every $t=1,\ldots,m$. Hence the procedure in \cref{sec:allocation} assigns all service events and constructs one route for each vehicle. These routes form the constructed solution, which is feasible for HetTSP if $\lambda=0$, split-delivery HetCVRP if $\lambda=1$, or unsplit-delivery HetCVRP if $\lambda=2$, and has makespan at most $M$. For both versions of HetCVRP, every tour in every vehicle's route is capacity-feasible.
\end{lemma}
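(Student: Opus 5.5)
The plan is to prove the first claim by induction on $t$ with a contradiction argument that sums the failing instances of \cref{eq:segment} over the vehicles. Once the first claim holds, the remaining claims are routine bookkeeping.

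\textbf{Each vehicle takes at least one event.} First I would check that a vehicle never stalls. Suppose vehicle $j$ starts at event $z$ with $v(z)\in A_s$, and suppose (as the induction will guarantee) that $j\le a_s$. By \cref{eq:layer-radius},
\[
2d(r,v(z))\le \tau s_{a_s}\le M s_{a_s}\le Ms_j,
\]
using $M\ge\tau$. So $z'=z$ satisfies \cref{eq:segment}, and vehicle $j$ receives at least the event $z$.

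\textbf{Inductive step.} Assume that all events of $A_1,\dots,A_{t-1}$ are assigned to vehicles $1,\dots,a_{t-1}$. Suppose, for contradiction, that some event of $A_1\cup\dots\cup A_t$ is still unassigned after vehicle $a_t$ has been processed. Let $z_j$ be the first event of vehicle $j$, and let $z_{a_t+1}$ be the first event left unassigned after vehicle $a_t$.

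\begin{itemize}
\item \emph{Layer of each start.} By the induction hypothesis, for $a_{s-1}<j\le a_t$ the event $z_j$ is not in $A_1,\dots,A_{s-1}$, so it lies in some $A_{s'}$ with $s'\ge s$. Hence $2d(r,v(z_j))\le \tau s_{a_{s'}}\le\tau s_{a_s}$. This also gives $j\le a_{s'}$, so the stalling argument above applies to every vehicle $j\le a_t$.
\item \emph{Failing inequality.} Each vehicle $j\le a_t$ stops before $z_{j+1}$, so the maximality in \cref{eq:segment} fails at $z'=z_{j+1}$. Since $z_{j+1}$ follows the last assigned event directly, this gives
\[
p(z_{j+1})-p(z_j)>Ms_j-d(r,v(z_j))-d(v(z_{j+1}),r).
\]
\item \emph{Telescoping.} Summing over $j=1,\dots,a_t$ yields
\[
p(z_{a_t+1})-p(z_1)>MS_{a_t}-d(r,v(z_1))-d(v(z_{a_t+1}),r)-\sum_{j=2}^{a_t}2d(r,v(z_j)).
\]
\end{itemize}

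\textbf{Bounding the pieces.} The two endpoint terms cancel against walk lengths. The walk reaches $v(z_1)$ from $r$, so $p(z_1)\ge d(r,v(z_1))$. Since $z_{a_t+1}$ lies in $C_1\circ\dots\circ C_t$, the walk returns to $r$ from $v(z_{a_t+1})$ within length $P_t$, so $p(z_{a_t+1})\le P_t-d(v(z_{a_t+1}),r)$.

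For the interior terms, group the indices $j\in(a_{s-1},a_s]$ for $s=2,\dots,t$. There are $a_s-a_{s-1}$ such indices, and each contributes at most $\tau s_{a_s}$. The index $j=1=a_1$ is excluded from the sum. So the interior sum is at most $\tau\Gamma_t$.

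Combining these gives $P_t>MS_{a_t}-\tau\Gamma_t$. This contradicts the definition \cref{eq:allocation-makespan} of $M$. I expect this step to be the main obstacle: the bookkeeping must show that the layer index of each start event $z_j$ is large enough. That is exactly where the induction hypothesis enters.

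\textbf{Remaining claims.} With $t=m$, all events are assigned. Vehicle $j$'s closed walk has length at most $Ms_j$ by \cref{eq:segment}, and shortcutting does not increase length, so its completion time is at most $M$.

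For HetTSP, the assigned clients of the vehicles partition $V^\circ$, and each vehicle's route is a single tour. For HetCVRP, the intermediate depot visits of $W$ are retained and no service event is divided. Hence every tour of a route is a subwalk of a capacity-feasible tour from \cref{lem:partition}, possibly reconnected to the depot. Its delivered quantities are a subset of that tour's quantities, so it is capacity-feasible.

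All quantities are delivered, so the demands are met. In the unsplit-delivery version each client has a single service event, so it is served in one visit on one tour.
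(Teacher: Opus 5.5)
Your proof is correct and follows essentially the same route as the paper: your induction on $t$ is the paper's minimal-counterexample argument. It uses the same telescoping of the failed extensions of \cref{eq:segment}, the same endpoint bounds $p(z_1)\ge d(r,v(z_1))$ and $p(z_{a_t+1})+d(v(z_{a_t+1}),r)\le P_t$, and the same grouping of the interior terms into $\tau\Gamma_t$. Two small fixes: in the ``layer of each start'' step, $s$ should be the index with $a_{s-1}<j\le a_s$ rather than $a_{s-1}<j\le a_t$. Also, the hypothesis you actually use is the strong-induction form, that the claim holds for every $s\le t-1$.
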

\begin{proof}
By \cref{eq:allocation-makespan}, $M\ge\tau$ and
\begin{equation}\label{eq:allocation-condition}
 P_t+\tau\Gamma_t\le MS_{a_t}\qquad(t=1,\ldots,m).
\end{equation}

Suppose, for a contradiction, that some service event for a client in $A_t$ remains unassigned after vehicle $a_t$ has been processed, and choose the smallest such $t$. Put $i=a_t$, and set $a_0=0$ for this proof. Since an event remains unassigned after vehicle $i$ has been processed, an unassigned event is present whenever any vehicle $j\le i$ is considered.

We first show that every vehicle $j=1,\ldots,i$ can take at least its first unassigned service event. Let $h\in\{1,\ldots,t\}$ be the unique index such that
\[
 a_{h-1}<j\le a_h.
\]
By the minimality of $t$, for every $\ell<h$ all service events for clients in $A_\ell$ have been assigned by the time vehicle $a_\ell$ has been processed. Since $a_\ell<j$, none of these events remains when vehicle $j$ is considered. Hence the first unassigned event at that time belongs to some layer client set $A_q$ with $q\ge h$. If $v$ is its client, then the upper bound in \cref{eq:layer-radius} and the ordering of the vehicle speeds give
\[
 2d(r,v)\le\tau s_{a_q}\le\tau s_{a_h}\le\tau s_j\le Ms_j.
\]
Thus \cref{eq:segment} is satisfied when this event is taken alone, so vehicle $j$ is assigned at least one service event.

For $1\le j\le i$, let $z_j$ be the first service event assigned to vehicle $j$, and let $z_{i+1}$ be the first event still unassigned after vehicle $i$ has been processed. The preceding argument shows that all these events exist. Since the allocation assigns consecutive service events and extends the block assigned to vehicle $j$ as far as \cref{eq:segment} permits, $z_{j+1}$ is the first event following that block, and vehicle $j$ cannot also take it. Therefore
\[
 d(r,v(z_j))+p(z_{j+1})-p(z_j)+d(v(z_{j+1}),r)>Ms_j
 \qquad(j=1,\ldots,i).
\]
Summing these inequalities over $j=1,\ldots,i$ telescopes the terms involving $p$. Since the subwalk of $W$ from its start to $z_1$ has length at least $d(r,v(z_1))$, we have $p(z_1)\ge d(r,v(z_1))$, and hence
\begin{equation}\label{eq:failed-extension}
 p(z_{i+1})+d(v(z_{i+1}),r)
 >MS_i-2\sum_{j=2}^{i}d(r,v(z_j)).
\end{equation}

For $a_{h-1}<j\le a_h$, the argument above shows that the client $v(z_j)$ belongs to $A_h$ or a later layer client set. Hence the upper bound in \cref{eq:layer-radius} gives
\[
 2d(r,v(z_j))\le\tau s_{a_h}.
\]
Since $i=a_t$, summing over the corresponding ranges of vehicle indices yields
\[
 2\sum_{j=2}^{a_t}d(r,v(z_j))
 \le
 \tau\sum_{h=2}^t(a_h-a_{h-1})s_{a_h}
 =\tau\Gamma_t.
\]
Moreover, service events are assigned in their order along $W$, and by assumption an event for a client in $A_t$ remains unassigned after vehicle $a_t$ has been processed. Therefore $z_{i+1}$ occurs in $C_1\circ\cdots\circ C_t$. The subwalk of $W$ from this occurrence to the final visit to the depot in $C_t$ has length at least $d(v(z_{i+1}),r)$, and consequently
\[
 p(z_{i+1})+d(v(z_{i+1}),r)\le P_t.
\]
Combining this inequality with \cref{eq:failed-extension} and the bound on the direct connections gives
\[
 P_t+\tau\Gamma_t>MS_{a_t},
\]
contradicting \cref{eq:allocation-condition}. Thus, for every $t$, all service events for clients in $A_t$ are assigned by the time vehicle $a_t$ has been processed. In particular, all service events are assigned.

Consider now the route constructed for a used vehicle $j$. By \cref{eq:segment}, its assigned subwalk together with the two direct connections to the depot forms a closed walk of length at most $Ms_j$. An unused vehicle receives the empty tour.

For HetTSP, shortcutting this closed walk gives a single tour starting and ending at $r$ and visiting exactly the clients assigned to vehicle $j$, without increasing its length. This tour forms the vehicle's route.

For both versions of HetCVRP, all intermediate visits to the depot are retained. Each nonempty tour in the resulting route is obtained from a subwalk of one of the capacity-feasible tours used to form $W$, with direct connections to the depot added at its ends when needed. It therefore retains only service events of that original tour, with unchanged delivered quantities, and its total delivery is at most $Q$. Hence every such tour is capacity-feasible. The tours, in their order, form the vehicle's route, whose total length is at most $Ms_j$.

Every service event is assigned exactly once and is never divided. In unsplit-delivery HetCVRP, the single service event of each client delivers its entire demand. In split-delivery HetCVRP, the delivered quantities fixed by the tour-splitting construction still sum to the client's demand, even if its service events are assigned to different vehicles. Thus the constructed solution is feasible for the corresponding problem. Since every vehicle $j$ has a route of length at most $Ms_j$, the constructed solution has makespan at most $M$.
\end{proof}

Thus every constructed solution considered by \cref{alg:main} is feasible, whether or not $\tau$ is itself a feasible makespan. The partition into vehicle groups affects the approximation guarantee, not feasibility; the remaining analysis bounds $P_t+\tau\Gamma_t$.

\subsection{Bounds from a fixed feasible solution}\label{sec:solution-bounds}
For the length analysis, fix a feasible HetTSP solution in the HetTSP case, and a feasible split-delivery HetCVRP solution for either HetCVRP algorithm. Let $B$ denote its makespan. For the final approximation guarantees, we take the fixed feasible solution to be optimal. Thus $B=\OPT_{\mathrm{HetTSP}}$ for HetTSP and $B=\OPT_{\mathrm{SD}}$ for both HetCVRP algorithms. Recall that $\OPT_{\mathrm{SD}}$ is a lower bound for the optimal value of unsplit-delivery HetCVRP.

Choose $\tau$ to be the largest member of $\Theta$ not exceeding $B$. Such a value exists because every client $v$ is visited by some vehicle $j$ with $2d(r,v)/s_j\le B$. The set $\Theta$ in \cref{eq:critical-times} contains every value at which a client's slowest reachable vehicle can change. Since no member of $\Theta$ lies in $(\tau,B]$, every client has the same slowest reachable vehicle at $\tau$ and at $B$. Hence $\tau$ is retained and
\begin{equation}\label{eq:shared-reachability}
 0<\tau\le B,\qquad e_{\tau}(v)=e_B(v)\quad(v\in V^\circ).
\end{equation}
This choice is used only in the analysis; the algorithm considers it among all retained tentative makespans in $\Theta$. Under \cref{eq:shared-reachability}, we use the routes of the fixed feasible solution to derive bounds for each layer client set. These bounds will control the length of $T_t$ and, for HetCVRP, the cumulative additional length introduced by tour splitting. They hold for any partition into vehicle groups as in \cref{eq:vehicle-groups}.

\begin{lemma}[Bounds for layer client sets]\label{lem:resources}
Under \cref{eq:shared-reachability}, the routes of the fixed feasible solution yield tours starting and ending at $r$, at most one for each vehicle $j\le b_t$, whose client sets are pairwise disjoint and have union $A_t$. The tour obtained from vehicle $j$ has length at most $Bs_j$. In particular,
\begin{equation}\label{eq:resource-bounds}
 \TSP(A_t)\le BS_{b_t}.
\end{equation}
If the fixed feasible solution is a split-delivery or unsplit-delivery HetCVRP solution, we also have
\begin{equation}\label{eq:cumulative-demand}
 \sum_{j=1}^t R_Q(A_j)\le BS_{b_t}\qquad(t=1,\ldots,m).
\end{equation}
\end{lemma}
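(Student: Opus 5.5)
The plan is to read the tours directly off the routes of the fixed feasible solution, using \cref{eq:shared-reachability} to control which vehicles can visit $A_t$. Let $v\in A_t$ be visited by vehicle $j$ in the fixed solution. Vehicle $j$'s route is a closed walk through $r$ and $v$ of length at most $Bs_j$. Hence $2d(r,v)\le Bs_j$, and so $j\le e_B(v)=e_\tau(v)\le b_t$. Thus every client of $A_t$ is visited only by vehicles $j\le b_t$.

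For the tours, assign each client $v\in A_t$ to one vehicle that visits it. In HetTSP there is exactly one visiting tour, unless the tours overlap, in which case we pick any one. In split delivery a client may have several visiting vehicles, and we pick, say, the smallest index. For each vehicle $j\le b_t$, concatenate its route at $r$ into a single closed walk of length at most $Bs_j$. Then shortcut this walk, deleting every visit except those to the clients of $A_t$ assigned to $j$, keeping $r$ as the endpoints. This gives at most one tour per vehicle $j\le b_t$, of length at most $Bs_j$. The client sets of these tours are pairwise disjoint and cover $A_t$. Joining these tours at $r$ and shortcutting gives one tour visiting exactly $A_t$ of length at most $\sum_{j\le b_t}Bs_j=BS_{b_t}$. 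This proves \cref{eq:resource-bounds}.

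For \cref{eq:cumulative-demand}, the set $A_1\cup\cdots\cup A_t$ consists of the clients $v$ with $e_\tau(v)\le b_t$. By the first paragraph, such clients are served entirely by vehicles $j\le b_t$. Now use the standard capacity lower bound route by route. Consider any capacity-feasible tour $T$ of the fixed solution, and let $x_v$ be the quantity it delivers to $v$. Then $\sum_v x_v\le Q$, and $\len(T)\ge 2d(r,v)$ for every client $v$ it visits. Hence
\[\len(T)\ge \frac{2}{Q}\sum_v x_v d(r,v).\]
Summing over all tours of vehicle $j$ gives $\frac2Q\sum_v x^{(j)}_v d(r,v)\le Bs_j$, where $x^{(j)}_v$ is the total quantity vehicle $j$ delivers to $v$. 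Sum over $j\le b_t$. For each $v\in A_1\cup\cdots\cup A_t$, all of $q_v$ is delivered by vehicles $j\le b_t$, so $\sum_{j\le b_t}x^{(j)}_v=q_v$. Hence
\[\sum_{\ell\le t}R_Q(A_\ell)=\frac2Q\sum_{v\in A_1\cup\cdots\cup A_t}q_vd(r,v)\le BS_{b_t}.\]

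The main subtlety is the first step: deriving $j\le b_t$ from reachability. This requires that ties in speed are handled by the fixed order, i.e. $e_B(v)$ is the maximum index $i$ with $2d(r,v)\le Bs_i$, and any visiting $j$ satisfies this inequality, so $j\le e_B(v)$. The per-tour capacity bound also needs $\len(T)\ge 2d(r,v)$ for each visited client, which is the triangle inequality. Everything else is routine shortcutting.
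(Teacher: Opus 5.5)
Your proposal is correct and follows the paper's proof essentially step for step. It uses the same three steps: the reachability argument $j\le e_B(v)=e_\tau(v)\le b_t$; assigning each client of $A_t$ to one visiting vehicle and shortcutting that vehicle's concatenated route; and the per-tour bound $\frac2Q\sum_v x_v d(r,v)\le\len(T)$ summed over the tours of vehicles $1,\ldots,b_t$, after dropping the nonnegative terms for clients outside $A_1\cup\cdots\cup A_t$.
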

\begin{proof}
Any route that visits client $v$ has length at least its round-trip length $2d(r,v)$. If vehicle $j$ visits $v$ and finishes within time $B$, then $2d(r,v)\le Bs_j$, so $j\le e_B(v)$. By \cref{eq:shared-reachability}, only vehicles $1,\ldots,b_t$ can visit clients in $A_t$ in the fixed feasible solution. Assign each client in $A_t$ to one vehicle that visits it in this solution. For each vehicle receiving clients, concatenate the tours of its route and shortcut the resulting closed walk to a tour starting and ending at $r$ and visiting exactly those clients. These tours have pairwise disjoint client sets with union $A_t$, and the tour obtained from vehicle $j$ has length at most $Bs_j$. Concatenating them at $r$ and shortcutting gives a tour starting and ending at $r$ and visiting exactly the clients in $A_t$, of length at most $BS_{b_t}$. This proves \cref{eq:resource-bounds}.

For \cref{eq:cumulative-demand}, let $X=A_1\cup\cdots\cup A_t$. The slowest reachable vehicle of every client in $X$ at $\tau$ is one of vehicles $1,\ldots,b_t$; by \cref{eq:shared-reachability}, the same is true at $B$. Hence all demand in $X$ is served by vehicles $1,\ldots,b_t$ in the fixed feasible solution. Consider one tour of one of these vehicles in the fixed feasible solution, of length $L$, and let $\xi_v$ be the quantity it delivers to $v\in X$, with $\xi_v=0$ if no delivery is made to $v$. Since $\sum_{v\in X}\xi_v\le Q$ and $d(r,v)\le L/2$ whenever $\xi_v>0$,
\[
 \frac2Q\sum_{v\in X}\xi_vd(r,v)\le\frac2Q\left(\sum_{v\in X}\xi_v\right)\frac L2\le L.
\]
Sum this inequality over the tours in the routes of vehicles $1,\ldots,b_t$ in the fixed feasible solution. Their total length is at most $BS_{b_t}$, and their deliveries satisfy all demands in $X$. Since the layer client sets are pairwise disjoint, $R_Q(X)=\sum_{j=1}^tR_Q(A_j)$, which proves \cref{eq:cumulative-demand}.
\end{proof}

\subsection{The tour bound and the depot saving}\label{sec:tour-analysis}
\begin{lemma}[Tour length for a layer client set]\label{lem:tour}
Under \cref{eq:shared-reachability}, so that $\tau\le B$ and every client has the same slowest reachable vehicle at $\tau$ and at $B$, the tour constructed on $A_t\cup\{r\}$ satisfies
\begin{equation}\label{eq:tour-bound}
 \len(T_t)\le\frac32BS_{b_t}-\frac{\tau}{2}b_t s_{b_t+1}.
\end{equation}
\end{lemma}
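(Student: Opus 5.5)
The bound \cref{eq:tour-bound} will come from the standard Christofides--Serdjukov estimate $\len(T_t)\le\len(F_t)+\len(J_t)$, with the two terms bounded separately. The matching term is bounded as usual by half a tour. The minimum spanning tree term gets the sharper bound $\len(F_t)\le BS_{b_t}-\frac{\tau}{2}b_ts_{b_t+1}$, and this is the source of the depot saving. If $A_t=\varnothing$, then $T_t$ is empty. The right-hand side is nonnegative because $\tau\le B$ and $b_ts_{b_t+1}\le S_{b_t}$, so this case is immediate. Assume $A_t\ne\varnothing$ from now on.

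\emph{Matching term.} Take an optimal tour on $A_t\cup\{r\}$, of length $\TSP(A_t)$, and shortcut it to a cycle on $\odd(F_t)$; this does not increase its length. Since $|\odd(F_t)|$ is even, the cycle splits into two perfect matchings on $\odd(F_t)$, and the shorter one has length at most $\frac12\TSP(A_t)$. By \cref{eq:resource-bounds} this gives $\len(J_t)\le\frac12BS_{b_t}$. Shortcutting the Euler traversal of $F_t\uplus J_t$ gives $\len(T_t)\le\len(F_t)+\len(J_t)$.

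\emph{Spanning tree term.} This is the main step. By \cref{lem:resources}, there are tours $T^{(j)}$, at most one for each vehicle $j\le b_t$, with pairwise disjoint client sets whose union is $A_t$ and with $\len(T^{(j)})\le Bs_j$. For each nonempty $T^{(j)}$, delete one of its two edges incident to $r$. What remains is a path that starts at $r$ and visits all clients of $T^{(j)}$; for a single-client tour $r,v,r$, the remaining edge $rv$ is such a path. The union of these paths is a connected spanning subgraph of $A_t\cup\{r\}$, so $\len(F_t)$ is at most its length. Each deleted edge joins $r$ to a client $v\in A_t$, so by the lower bound in \cref{eq:layer-radius} it has length $d(r,v)>\frac{\tau}{2}s_{b_t+1}$. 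Hence every nonempty tour contributes at most $Bs_j-\frac{\tau}{2}s_{b_t+1}$ to the spanning subgraph.

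For a vehicle $j\le b_t$ with no tour, the contribution is $0$. This is also at most $Bs_j-\frac{\tau}{2}s_{b_t+1}$, because $s_j\ge s_{b_t+1}$ and $B\ge\tau$. Summing over all $j=1,\ldots,b_t$ therefore gives
\[
\len(F_t)\le\sum_{j=1}^{b_t}\left(Bs_j-\frac{\tau}{2}s_{b_t+1}\right)=BS_{b_t}-\frac{\tau}{2}b_ts_{b_t+1}.
\]
Adding the matching bound yields \cref{eq:tour-bound}.

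The delicate point is that the number of nonempty tours may be smaller than $b_t$. The saving of $\frac{\tau}{2}s_{b_t+1}$ for the vehicles without tours must instead be taken from their unused budget $Bs_j$, and this is exactly what the inequalities $s_j\ge s_{b_t+1}$ and $B\ge\tau$ provide.
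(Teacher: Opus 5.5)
Your proposal is correct and follows the paper's proof essentially step for step. For the spanning tree, you delete one depot edge of length more than $\tau s_{b_t+1}/2$ from each nonempty tour of \cref{lem:resources}, and you pad the sum with the nonnegative terms $Bs_j-\tau s_{b_t+1}/2$ for vehicles without tours. For the matching, you bound it by half of $\TSP(A_t)\le BS_{b_t}$ via the two alternating pairings, and you handle the empty case exactly as the paper does.
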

\begin{proof}
Put $h_t=\tau s_{b_t+1}/2$. For a nonempty $A_t$, take the tours from \cref{lem:resources} that are derived from the fixed feasible solution. Let $\mathcal I\subseteq\{1,\ldots,b_t\}$ be the indices of their nonempty tours, and denote those tours by $O_j$. Delete one traversal of an edge incident to the depot from every $O_j$, including one of the two traversals when $O_j$ contains a single client. By the lower bound in \cref{eq:layer-radius}, each deleted traversal has length at least $h_t$, so the path left from $O_j$ has length at most $\len(O_j)-h_t\le Bs_j-h_t$. These paths meet only at $r$ and together form a spanning tree on $A_t\cup\{r\}$. By the minimum spanning tree property of $F_t$,
\begin{equation}\label{eq:tree-bound}
 \len(F_t)\le\sum_{j\in\mathcal I}(Bs_j-h_t)
 \le\sum_{j=1}^{b_t}(Bs_j-h_t)
 =BS_{b_t}-b_th_t.
\end{equation}
The second inequality adds only nonnegative terms, since $Bs_j\ge\tau s_{b_t+1}=2h_t$ for every $j\le b_t$.

To bound the matching, list the vertices of $\odd(F_t)$ in their cyclic order along a tour attaining $\TSP(A_t)$, which starts and ends at $r$. The two alternating pairings have combined length at most $\TSP(A_t)$ by the triangle inequality. Consequently, the minimum-length perfect matching satisfies $\len(J_t)\le\TSP(A_t)/2\le BS_{b_t}/2$. The same statement holds when the odd set is empty. Adding this estimate to \cref{eq:tree-bound} and shortcutting proves \cref{eq:tour-bound}. If $A_t$ is empty, its tour has length zero, and the right side of \cref{eq:tour-bound} is nonnegative because $b_ts_{b_t+1}\le S_{b_t}$ and $\tau\le B$.
\end{proof}

The depot saving is $\frac{\tau}{2}b_ts_{b_t+1}$; \cref{sec:accumulated-analysis} uses it to bound the quantities $P_t+\tau\Gamma_t$ that determine $M$ in \cref{eq:allocation-makespan}.

\section{Vehicle groups and approximation guarantees}\label{sec:analysis-groups}
Fix $\lambda\in\{0,1,2\}$ as defined in \cref{sec:algorithm}. For the fixed feasible solution and retained tentative makespan $\tau$ from \cref{sec:solution-bounds}, we now show that some partition into vehicle groups in $\mathcal G_{K_\lambda}$ gives a value of $M$ at most the corresponding approximation factor stated in \cref{sec:results} times $B$.

With $B$ and $\tau$ chosen in \cref{sec:solution-bounds}, define $\beta=\tau/B\in(0,1]$ for the analysis. For any $K>2$, consider the partition into vehicle groups produced by the construction in \cref{sec:vehicle-groups} with $\eta=\beta$. By the definition of $\mathcal E_K$ and the fact proved in \cref{sec:vehicle-groups} that the resulting partition is unchanged between consecutive breakpoints, some value in $\mathcal E_K$ produces the same partition. Hence this partition belongs to $\mathcal G_K$.

\subsection{Accumulated tour lengths and direct connections}\label{sec:accumulated-analysis}
For the remainder of the analysis, fix $K>2$ and use the partition constructed in \cref{sec:vehicle-groups} with $\eta=\beta$, indexed from fastest to slowest as $[a_1,b_1],\ldots,[a_m,b_m]$. By the definition of $M$ in \cref{eq:allocation-makespan} and $\tau\le B$, proving $M\le\rho B$ for a factor $\rho\ge1$ reduces to bounding $P_t+\tau\Gamma_t$ by $\rho BS_{a_t}$ for every $t$.

In this numbering and with $\eta=\beta$, the rule in \cref{eq:partition-rule} says that each $a_t$ is the smallest index $j\in\{1,\ldots,b_t\}$ satisfying
\begin{equation}\label{eq:rule-final}
 \left(K-\frac32\right)S_{b_t}+\beta\left(j-\frac{b_t}{2}\right)s_{b_t+1}\le KS_j.
\end{equation}

To write the tour bound from \cref{lem:tour} more briefly, define
\begin{equation}\label{eq:beta-f}
 f_i(\beta)=\frac32S_i-\frac\beta2 i s_{i+1}\qquad(1\le i\le k).
\end{equation}
Then $\len(T_t)\le Bf_{b_t}(\beta)$ because $\tau=\beta B$. For HetTSP, $C_t=T_t$. For either version of HetCVRP, \cref{lem:partition} gives $\len(C_t)\le Bf_{b_t}(\beta)+\lambda R_Q(A_t)$.

We separate the tour bounds for the earlier layer client sets from the bound for the current tour by defining
\begin{equation}\label{eq:H}
 H_t=\sum_{j<t}f_{b_j}(\beta)+\beta\Gamma_t.
\end{equation}
Thus $BH_t$ bounds the total length of $T_1,\ldots,T_{t-1}$ plus $\tau\Gamma_t$. For HetCVRP, \cref{eq:cumulative-demand} bounds the additional length from tour splitting over all of $A_1,\ldots,A_t$ by $\lambda BS_{b_t}$. Summing the bounds on $C_1,\ldots,C_t$ therefore gives, for all three problems,
\begin{equation}\label{eq:combined-length}
 P_t+\tau\Gamma_t\le B\bigl(H_t+f_{b_t}(\beta)+\lambda S_{b_t}\bigr).
\end{equation}
The choice of the first vehicle in each group gives two useful inequalities. Since $a_t$ is the smallest index satisfying \cref{eq:rule-final}, the index $a_t-1$ fails it whenever $a_t>1$; this will bound the earlier tours and their contributions to $\beta\Gamma_t$ by the speeds within their respective groups. The inequality \cref{eq:rule-final} at $j=a_t$ then bounds the current tour after its depot saving. For HetTSP with $K=3$, these two inequalities bound the two contributions described in the introduction and give the factor $6$ directly. The next subsection gives the additional estimates needed for the general bound, including the two HetCVRP guarantees.

\Needspace{6\baselineskip}
\begin{lemma}[Accumulated length estimate]\label{lem:accumulated}
The partition constructed in \cref{sec:vehicle-groups} with $\eta=\beta$ satisfies $H_t\le KS_{a_t-1}$ for every $t=1,\ldots,m$. Moreover, if $\lambda=0$ and $K=3$, then $P_t+\tau\Gamma_t\le6BS_{a_t}$ for every $t=1,\ldots,m$.
\end{lemma}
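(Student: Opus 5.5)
The first claim, $H_t\le KS_{a_t-1}$, will be proved by induction on $t$, using at each step the fact that the index $a_t-1$ fails \cref{eq:rule-final}. The second claim will then follow by adding the inequality \cref{eq:rule-final} at $j=a_t$.

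\emph{Base case.} For $t=1$ we have $H_1=0$, since the sum in \cref{eq:H} is empty and $\Gamma_1=0$. Also $a_1=1$ gives $S_{a_1-1}=S_0=0$, so the claim holds.

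\emph{Recursion for $H_t$.} Since $b_t=a_{t+1}-1$, we have $s_{b_t+1}=s_{a_{t+1}}$. By \cref{eq:P-Gamma}, $\Gamma_{t+1}=\Gamma_t+(a_{t+1}-a_t)s_{a_{t+1}}$. Writing $s=s_{b_t+1}$, this gives
\[
H_{t+1}=H_t+\tfrac32S_{b_t}-\tfrac\beta2 b_ts+\beta(b_t+1-a_t)s .
\]

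\emph{Step from $t=1$.} Here $a_1=1$ and $H_1=0$, so $H_2=\frac32S_{b_1}+\frac\beta2 b_1 s$. We have $b_1s_{b_1+1}\le S_{b_1}$ by the speed ordering and $\beta\le1$. Hence $H_2\le 2S_{b_1}\le KS_{b_1}=KS_{a_2-1}$, using $K>2$.

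\emph{Step from $t\ge2$.} Now $a_t>1$, so $j=a_t-1$ violates \cref{eq:rule-final}:
\[
KS_{a_t-1}<\Bigl(K-\tfrac32\Bigr)S_{b_t}+\beta\Bigl(a_t-1-\tfrac{b_t}{2}\Bigr)s .
\]
Combine this with the inductive hypothesis $H_t\le KS_{a_t-1}$ and substitute into the recursion. The $S_{b_t}$ terms add up to $KS_{b_t}$. The coefficient of $\beta s$ is
\[
\Bigl(a_t-1-\tfrac{b_t}2\Bigr)-\tfrac{b_t}2+(b_t+1-a_t)=0 .
\]
Hence $H_{t+1}<KS_{b_t}=KS_{a_{t+1}-1}$, which completes the induction.

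\emph{Second claim.} Take $\lambda=0$ and $K=3$. By \cref{eq:combined-length},
\[
P_t+\tau\Gamma_t\le B\bigl(H_t+f_{b_t}(\beta)\bigr)\le B\bigl(3S_{a_t-1}+f_{b_t}(\beta)\bigr).
\]
Inequality \cref{eq:rule-final} at $j=a_t$ with $K=3$ reads $\frac32S_{b_t}+\beta(a_t-\frac{b_t}2)s\le 3S_{a_t}$. Rearranging,
\[
f_{b_t}(\beta)=\tfrac32S_{b_t}-\tfrac\beta2b_ts\le 3S_{a_t}-\beta a_ts\le 3S_{a_t}.
\]
Together with $S_{a_t-1}\le S_{a_t}$, this gives $P_t+\tau\Gamma_t\le 6BS_{a_t}$.

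The only delicate point is the bookkeeping of the $\beta s_{b_t+1}$ terms. It uses the identification $s_{a_{t+1}}=s_{b_t+1}$, and the $t=1$ step must be treated separately because no failing index $a_1-1$ exists there. That step instead uses $K>2$ together with $b_1s_{b_1+1}\le S_{b_1}$.
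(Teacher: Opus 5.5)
Your proof is correct and is essentially the paper's argument. Your induction step is the per-group inequality $f_{b}(\beta)+\beta(b-a+1)s_{b+1}\le K(S_b-S_{a-1})$ that comes from the failure of $j=a-1$, which the paper sums over the earlier groups; your separate first-group step via $\beta b_1s_{b_1+1}\le S_{b_1}$ and $K>2$ is the observation the paper uses for $a=1$, and the HetTSP bound follows in the same way from \cref{eq:combined-length} and \cref{eq:rule-final} at $j=a_t$.
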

\begin{proof}
Fix a vehicle group $[a,b]$. If $a>1$, the minimality of $a$ means that $j=a-1$ does not satisfy \cref{eq:rule-final}. Consequently,
\[
 KS_{a-1}\le\left(K-\frac32\right)S_b+\beta\left(a-1-\frac b2\right)s_{b+1}.
\]
When $a=1$, this inequality still holds: $S_0=0$, while the right side is at least $(K-2)S_b>0$ because $\beta bs_{b+1}\le S_b$. Rearranging and using \cref{eq:beta-f} gives
\begin{equation}\label{eq:group-length}
 f_b(\beta)+\beta(b-a+1)s_{b+1}\le K(S_b-S_{a-1}).
\end{equation}
Here $f_b(\beta)$ is the tour-length bound divided by $B$. For an earlier group, the second term is its contribution to $\beta\Gamma_t$, as the following identity shows. Since $a_{j+1}=b_j+1$,
\[
 \Gamma_t=\sum_{j<t}(a_{j+1}-a_j)s_{a_{j+1}}
 =\sum_{j<t}(b_j-a_j+1)s_{b_j+1}.
\]
Thus each earlier group contributes at most $K$ times the sum of the speeds within that group. The earlier groups partition the vehicles $1,\ldots,a_t-1$, so
\[
 \begin{aligned}
 H_t
 &=\sum_{j<t}\bigl(f_{b_j}(\beta)+\beta(b_j-a_j+1)s_{b_j+1}\bigr)\\
 &\le K\sum_{j<t}(S_{b_j}-S_{a_j-1})
 =KS_{a_t-1}.
 \end{aligned}
\]
This also covers $t=1$, when the sums are empty.

For HetTSP, let $\lambda=0$ and $K=3$. We now use that $a_t$ itself satisfies \cref{eq:rule-final}. At $K=3$, this gives
\[
 f_{b_t}(\beta)+\beta a_t s_{b_t+1}\le3S_{a_t},
\]
so $f_{b_t}(\beta)\le3S_{a_t}$. Combining the bounds for the earlier groups and the current tour in \cref{eq:combined-length} yields
\[
 P_t+\tau\Gamma_t\le B\bigl(H_t+f_{b_t}(\beta)\bigr)
 \le3BS_{a_t-1}+3BS_{a_t}\le6BS_{a_t}.
\]
\end{proof}

\subsection{Bounding the remaining term}
To derive the general bound, including the HetCVRP guarantees, we bound $f_{b_t}(\beta)+\lambda S_{b_t}$ in \cref{eq:combined-length} relative to $S_{a_t}$. For this purpose, we use \cref{eq:rule-final} at $j=a_t$:
\begin{equation}\label{eq:partition-right}
 \left(K-\frac32\right)S_{b_t}+\beta\left(a_t-\frac{b_t}{2}\right)s_{b_t+1}\le KS_{a_t}.
\end{equation}

The first bound below follows by dropping a nonnegative term from \cref{eq:partition-right}. The second bounds the total speed through the last vehicle of the group in terms of the total speed through its first vehicle.

\begin{lemma}[Consequences of the partition rule]\label{lem:local}
For every vehicle group $[a_t,b_t]$ in the partition constructed in \cref{sec:vehicle-groups} with $\eta=\beta$, we have
\begin{align}
 \left(K-\frac32\right)S_{b_t}-\frac\beta2 b_t s_{b_t+1}&\le KS_{a_t},\label{eq:local-one}\\
 S_{b_t}&\le\frac{K-\frac12}{K-2}S_{a_t}.\label{eq:local-x}
\end{align}
\end{lemma}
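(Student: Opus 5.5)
The plan starts from \cref{eq:partition-right}, which is the defining inequality \cref{eq:rule-final} evaluated at $j=a_t$. Since $a_t\ge1$, $\beta\ge0$ and $s_{b_t+1}\ge0$, the term $\beta a_t s_{b_t+1}$ is nonnegative. Dropping it from the left side of \cref{eq:partition-right} gives \cref{eq:local-one} directly.

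For \cref{eq:local-x}, we need an upper bound on the subtracted term $\frac\beta2 b_t s_{b_t+1}$ in \cref{eq:local-one} by a multiple of $S_{b_t}$. Since $\beta\le1$ and the speeds are nonincreasing, every $j\le b_t$ satisfies $s_j\ge s_{b_t+1}$, so $b_t s_{b_t+1}\le S_{b_t}$. Hence
\[
 \frac\beta2 b_t s_{b_t+1}\le\frac12 S_{b_t}.
\]
Substituting this into \cref{eq:local-one} gives
\[
 \left(K-\frac32\right)S_{b_t}-\frac12S_{b_t}\le KS_{a_t},
\]
that is, $(K-2)S_{b_t}\le KS_{a_t}$.

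This bound is weaker than claimed: it gives $K/(K-2)$, while the statement asks for $(K-\frac12)/(K-2)$. One should check which inequality is stronger. Since $K-\frac12<K$, the claimed bound is actually the sharper one, so the crude estimate above is not enough. The plan must therefore keep the term $\beta a_t s_{b_t+1}$ instead of discarding it.

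Return to \cref{eq:partition-right}:
\[
 \left(K-\frac32\right)S_{b_t}\le KS_{a_t}-\beta a_t s_{b_t+1}+\frac\beta2 b_t s_{b_t+1}.
\]
Write $b_t s_{b_t+1}=a_t s_{b_t+1}+(b_t-a_t)s_{b_t+1}$. The right side then becomes
\[
 KS_{a_t}-\frac\beta2 a_t s_{b_t+1}+\frac\beta2(b_t-a_t)s_{b_t+1}.
\]
The middle term is nonpositive and can be dropped. For the last term, each of the $b_t-a_t$ vehicles $a_t+1,\ldots,b_t$ has speed at least $s_{b_t+1}$, so $(b_t-a_t)s_{b_t+1}\le S_{b_t}-S_{a_t}$. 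Using $\beta\le1$, this yields
\[
 \left(K-\frac32\right)S_{b_t}\le KS_{a_t}+\frac12\bigl(S_{b_t}-S_{a_t}\bigr).
\]
Rearranging gives $(K-2)S_{b_t}\le\left(K-\frac12\right)S_{a_t}$. Since $K>2$, dividing by $K-2$ proves \cref{eq:local-x}.

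The main point to get right is this splitting of $b_t$ into $a_t$ and $b_t-a_t$. It lets the retained term $\beta a_t s_{b_t+1}$ absorb half of the depot-saving term, and bounds only the remaining vehicles $a_t+1,\ldots,b_t$ by their speeds. Everything else is routine rearrangement that uses only the monotonicity of the speeds, $0<\beta\le1$, and $K>2$.
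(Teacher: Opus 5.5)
Your proof is correct and is essentially the paper's argument. The paper substitutes $\beta a_t s_{b_t+1}\ge\beta b_t s_{b_t+1}-(S_{b_t}-S_{a_t})$ into \cref{eq:partition-right} and adds the result to \cref{eq:local-one}; this amounts to your single step of using half of $\beta a_t s_{b_t+1}$ to cancel part of $\frac\beta2 b_t s_{b_t+1}$, discarding the other half, and bounding $\frac\beta2(b_t-a_t)s_{b_t+1}$ by $\frac12(S_{b_t}-S_{a_t})$.
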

\begin{proof}
Dropping the nonnegative term $\beta a_t s_{b_t+1}$ from \cref{eq:partition-right} proves \cref{eq:local-one}.

For the second bound, the ordering of the vehicle speeds and $\beta\le1$ give $\beta(b_t-a_t)s_{b_t+1}\le S_{b_t}-S_{a_t}$. Therefore
\[
 \beta a_t s_{b_t+1}\ge\beta b_t s_{b_t+1}-(S_{b_t}-S_{a_t}).
\]
Substituting this lower bound into \cref{eq:partition-right} yields
\[
 \left(K-\frac52\right)S_{b_t}+\frac\beta2 b_t s_{b_t+1}
 \le(K-1)S_{a_t}.
\]
Adding this inequality to \cref{eq:local-one} cancels the terms involving $s_{b_t+1}$ and gives $(2K-4)S_{b_t}\le(2K-1)S_{a_t}$. Since $K>2$, division by $2K-4$ proves \cref{eq:local-x}.
\end{proof}

\subsection{Approximation factors}\label{sec:factors}
\begin{theorem}[Combined length bound]\label{thm:length}
Let $\lambda\in\{0,1,2\}$ and $K>2$. Fix a feasible HetTSP solution if $\lambda=0$, or a feasible split-delivery HetCVRP solution if $\lambda>0$, and denote its makespan by $B$. Let $\tau$ satisfy \cref{eq:shared-reachability}, so that $\tau\le B$ and every client has the same slowest reachable vehicle at $\tau$ and at $B$, and use the partition constructed in \cref{sec:vehicle-groups} with $\eta=\beta=\tau/B$. The corresponding closed walks $C_t$ from \cref{sec:allocation} satisfy
\begin{equation}\label{eq:final-length}
 P_t+\tau\Gamma_t\le B\rho_\lambda(K)S_{a_t}\qquad(t=1,\ldots,m),
\end{equation}
where
\begin{equation}\label{eq:rho}
 \rho_\lambda(K)=
 \begin{cases}
 K+\frac32+\lambda+\dfrac{\frac32(1+\lambda)}{K-2},&2<K\le\lambda+3,\\[2mm]
 K+\lambda+3,&K>\lambda+3.
 \end{cases}
\end{equation}
\end{theorem}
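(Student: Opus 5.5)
The plan is to start from \cref{eq:combined-length} and bound its two parts separately against $S_{a_t}$. \Cref{lem:accumulated} gives $H_t\le KS_{a_t-1}\le KS_{a_t}$. The claim therefore reduces to
\[
 f_{b_t}(\beta)+\lambda S_{b_t}\le\bigl(\rho_\lambda(K)-K\bigr)S_{a_t}
 \qquad(t=1,\ldots,m).
\]
Write $a=a_t$ and $b=b_t$. Multiplying by $B$ and adding the bound on $BH_t$ then yields \cref{eq:final-length}.

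\textbf{Step 1: a single linear bound.} By \cref{eq:beta-f}, $f_b(\beta)=\frac32S_b-\frac\beta2 bs_{b+1}$. Inequality \cref{eq:local-one} from \cref{lem:local} is
\[
 \left(K-\tfrac32\right)S_b-\tfrac\beta2 bs_{b+1}\le KS_a .
\]
Its left side equals $f_b(\beta)+(K-3)S_b$. Rearranging therefore gives
\[
 f_b(\beta)+\lambda S_b\le KS_a+(\lambda+3-K)S_b .
\]
Everything now depends on the sign of the coefficient $\lambda+3-K$, and this sign is exactly what separates the two cases in \cref{eq:rho}.

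\textbf{Step 2: the case $K>\lambda+3$.} Here the coefficient of $S_b$ is negative. Since $b\ge a$ and the speeds are positive, $S_b\ge S_a$. Replacing $S_b$ by $S_a$ therefore only increases the right side. This gives $f_b(\beta)+\lambda S_b\le(\lambda+3)S_a$, and hence the total factor $K+\lambda+3$.

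\textbf{Step 3: the case $2<K\le\lambda+3$.} Now the coefficient is nonnegative, so I use \cref{eq:local-x}, namely $S_b\le\frac{K-1/2}{K-2}S_a$. This gives
\[
 f_b(\beta)+\lambda S_b\le\left(K+(\lambda+3-K)\,\frac{K-\frac12}{K-2}\right)S_a .
\]
To simplify, split $K-\frac12=(K-2)+\frac32$. This gives
\[
 (\lambda+3-K)\,\frac{K-\frac12}{K-2}=(\lambda+3-K)+\frac{\frac32(\lambda+3-K)}{K-2}.
\]
Then write $\lambda+3-K=(\lambda+1)-(K-2)$ in the last numerator. The last fraction becomes $\frac{\frac32(\lambda+1)}{K-2}-\frac32$. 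Collecting terms, the coefficient of $S_a$ is
\[
 \lambda+\tfrac32+\frac{\frac32(1+\lambda)}{K-2}.
\]
Adding $K$ from the bound on $H_t$ gives exactly the first branch of \cref{eq:rho}.

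I expect no genuine obstacle. The only delicate points are the monotonicity $S_b\ge S_a$, which is used in the right direction in Step 2, and the sign bookkeeping in the algebra of Step 3.
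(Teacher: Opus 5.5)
Your proposal is correct and follows essentially the same route as the paper. It combines \cref{eq:combined-length} with $H_t\le KS_{a_t-1}\le KS_{a_t}$ from \cref{lem:accumulated}, uses \cref{eq:local-one} to reduce to the coefficient $\lambda+3-K$ on $S_{b_t}$, and splits on its sign, applying \cref{eq:local-x} in the nonnegative case and $S_{b_t}\ge S_{a_t}$ in the negative case; your algebra in Step~3 correctly reproduces the first branch of \cref{eq:rho}.
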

\begin{proof}
Fix $t$. By \cref{eq:combined-length,lem:accumulated} and $S_{a_t-1}\le S_{a_t}$,
\[
 \begin{aligned}
 P_t+\tau\Gamma_t
 &\le B\left(KS_{a_t}+\left(\frac32+\lambda\right)S_{b_t}-\frac\beta2 b_t s_{b_t+1}\right)\\
 &\le B\bigl(2KS_{a_t}+(\lambda+3-K)S_{b_t}\bigr),
 \end{aligned}
\]
where the second inequality follows from \cref{eq:local-one}.

If $2<K\le\lambda+3$, the coefficient $\lambda+3-K$ is nonnegative, so \cref{eq:local-x} gives
\[
 \begin{aligned}
 P_t+\tau\Gamma_t
 &\le B\left(2K+(\lambda+3-K)\frac{K-\frac12}{K-2}\right)S_{a_t}\\
 &=B\left(K+\frac32+\lambda+\frac{\frac32(1+\lambda)}{K-2}\right)S_{a_t}.
 \end{aligned}
\]
If $K>\lambda+3$, the coefficient is negative. Using $S_{b_t}\ge S_{a_t}$ instead gives $P_t+\tau\Gamma_t\le B(K+\lambda+3)S_{a_t}$. This proves both branches.
\end{proof}

For the general bound, the second branch of \cref{eq:rho} has infimum $2\lambda+6$, attained by the first branch at $K=\lambda+3$. It therefore suffices to minimize the first branch. Writing $u=K-2$, it equals $\frac72+\lambda+u+\frac{3(1+\lambda)}{2u}$ for $0<u\le\lambda+1$. Its unconstrained minimizer is $u=\sqrt{3(1+\lambda)/2}$. For $\lambda=0$ the minimum on this interval is at $u=1$; for $\lambda=1,2$ the unconstrained minimizer lies in the interval. This proves that the constants $K_\lambda$ in \cref{eq:K-values} give
\begin{equation}\label{eq:ratios}
 \rho_0(K_0)=6,\qquad \rho_1(K_1)=\frac92+2\sqrt3,\qquad \rho_2(K_2)=\frac{11}{2}+3\sqrt2.
\end{equation}

\subsection{Proofs of the main results}\label{sec:main-proofs}
We combine \cref{lem:allocation,thm:length} to bound the makespan returned by the algorithm relative to a feasible HetTSP solution or a feasible split-delivery HetCVRP solution, as appropriate.

\begin{lemma}[Guarantee relative to a feasible solution]\label{lem:algorithm-guarantee}
Run \cref{alg:main} with parameter $\lambda\in\{0,1,2\}$ and $K=K_\lambda$. Let $B$ be the makespan of a fixed feasible HetTSP solution if $\lambda=0$, or of a fixed feasible split-delivery HetCVRP solution if $\lambda>0$. The algorithm returns one route for each vehicle; together these routes form the output solution. The output solution is feasible for HetTSP if $\lambda=0$, split-delivery HetCVRP if $\lambda=1$, or unsplit-delivery HetCVRP if $\lambda=2$. Its makespan satisfies $\ALG\le\rho_\lambda(K_\lambda)B$.
\end{lemma}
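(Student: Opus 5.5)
The plan is to combine feasibility of every constructed solution (\cref{lem:allocation}) with the length bound of \cref{thm:length}, applied to one specific pair of tentative makespan and partition that the algorithm is guaranteed to try.

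\emph{Step 1: a solution is returned and it is feasible.} Since $V^\circ\neq\varnothing$ after the trivial case, the largest value of $\Theta$ is retained, as shown in \cref{sec:vehicle-groups}. The family $\mathcal G_K$ is nonempty, because it contains the partition for $\eta=0$. Hence at least one solution is constructed. By \cref{lem:allocation}, each constructed solution consists of one route per vehicle. It is feasible for the problem selected by $\lambda$, and for HetCVRP every tour in it is capacity-feasible. The returned solution is one of the constructed solutions, so it is feasible.

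\emph{Step 2: the algorithm tries a good pair.} Let $\tau$ be the largest member of $\Theta$ not exceeding $B$, as in \cref{sec:solution-bounds}. Then \cref{eq:shared-reachability} holds, and in particular $\tau$ is retained. Put $\beta=\tau/B\in(0,1]$ and take the partition produced with $\eta=\beta$. The argument at the start of \cref{sec:analysis-groups} shows that this partition is constant on each interval between consecutive breakpoints, so some $\eta'\in\mathcal E_K$ yields the same partition. This partition therefore lies in $\mathcal G_K$. Consequently, the algorithm constructs a solution for exactly this pair $(\tau,\text{partition})$.

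\emph{Step 3: bound the makespan of that solution.} For this pair, \cref{thm:length} with $K=K_\lambda>2$ gives
\[
\frac{P_t+\tau\Gamma_t}{S_{a_t}}\le\rho_\lambda(K_\lambda)B\qquad(t=1,\ldots,m).
\]
We also have $\tau\le B\le\rho_\lambda(K_\lambda)B$, since every value in \cref{eq:ratios} is at least $1$. Hence the quantity $M$ in \cref{eq:allocation-makespan} satisfies $M\le\rho_\lambda(K_\lambda)B$. By \cref{lem:allocation}, the constructed solution for this pair has makespan at most $M$. Since the algorithm returns a solution of minimum makespan among all those it constructs, $\ALG\le\rho_\lambda(K_\lambda)B$.

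\emph{Main point to check.} The delicate step is Step 2: the analysis uses the partition for $\eta=\beta$, and $\beta$ need not itself belong to $\mathcal E_K$. This is resolved by the piecewise-constancy argument of \cref{sec:vehicle-groups}, which covers all three cases:
\begin{itemize}
\item if $\beta$ is a breakpoint, it lies in $\mathcal E_K$;
\item if $\beta=1$, it is an endpoint and lies in $\mathcal E_K$;
\item otherwise $\beta$ lies in an open interval between consecutive distinct values of the breakpoints together with $0$ and $1$, and that interval's midpoint is in $\mathcal E_K$ and gives the same partition.
\end{itemize}

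For HetCVRP, $B$ is the makespan of a split-delivery solution, while the unsplit algorithm ($\lambda=2$) outputs an unsplit-delivery solution. This mismatch is harmless, because \cref{thm:length} is stated for exactly this setting.
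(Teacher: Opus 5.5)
Your proposal is correct and follows the paper's own proof. Like the paper, you choose $\tau$ as in \cref{sec:solution-bounds}, use piecewise constancy to place the $\eta=\beta$ partition in $\mathcal G_{K_\lambda}$, apply \cref{thm:length} to get $M\le\rho_\lambda(K_\lambda)B$, and conclude via \cref{lem:allocation} and minimality of the returned makespan; your extra nonemptiness check in Step 1 is redundant, since Step 2 already exhibits a constructed solution, but it is harmless.
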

\begin{proof}
For an empty client set, $\ALG=0$, so the assertion holds. Otherwise, every constructed solution is feasible by \cref{lem:allocation}. For the fixed feasible solution in the lemma, choose $\tau$ as in \cref{sec:solution-bounds}; it is retained and satisfies \cref{eq:shared-reachability}. As shown at the beginning of \cref{sec:analysis-groups}, $\mathcal G_{K_\lambda}$ contains the partition obtained by the construction in \cref{sec:vehicle-groups} with $\eta=\beta=\tau/B$. For these choices, \cref{thm:length} gives $P_t+\tau\Gamma_t\le B\rho_\lambda(K_\lambda)S_{a_t}$ for every $t$. Since $\tau\le B$ and $\rho_\lambda(K_\lambda)\ge1$, the definition of $M$ in \cref{eq:allocation-makespan} gives $M\le B\rho_\lambda(K_\lambda)$. The constructed solution corresponding to these choices has makespan at most $M$ by \cref{lem:allocation}. Since the output solution has no greater makespan, $\ALG\le M\le\rho_\lambda(K_\lambda)B$.
\end{proof}

The running-time theorem in \cref{sec:running-time} applies to all three algorithms. We use \cref{lem:algorithm-guarantee} to finish the three approximation proofs and the stronger guarantee following \cref{thm:ud}.

\begin{proof}[Proof of \cref{thm:hettsp}]
Use the HetTSP construction with $\lambda=0$ and $K=K_0=3$, and take $B=\OPT_{\mathrm{HetTSP}}$. By \cref{lem:algorithm-guarantee}, the output solution consists of one tour for each vehicle and satisfies $\ALG\le\rho_0(K_0)\OPT_{\mathrm{HetTSP}}=6\OPT_{\mathrm{HetTSP}}$.
\end{proof}

\begin{proof}[Proof of \cref{thm:sd}]
Use the split-delivery HetCVRP construction with $\lambda=1$ and $K=K_1=2+\sqrt3$, and take $B=\OPT_{\mathrm{SD}}$. By \cref{lem:algorithm-guarantee}, the output solution is feasible for split-delivery HetCVRP and satisfies $\ALG\le\rho_1(K_1)\OPT_{\mathrm{SD}}=(\frac92+2\sqrt3)\OPT_{\mathrm{SD}}$.
\end{proof}

\begin{proof}[Proof of \cref{thm:ud}]
Use the unsplit-delivery HetCVRP construction with $\lambda=2$ and $K=K_2=2+3/\sqrt2$. The tour-splitting construction and allocation deliver each client's entire demand in one visit, so the output solution is feasible for unsplit-delivery HetCVRP. Apply \cref{lem:algorithm-guarantee} with $B=\OPT_{\mathrm{SD}}$. It follows that
\[
\ALG\le\rho_2(K_2)\OPT_{\mathrm{SD}}=\left(\frac{11}{2}+3\sqrt2\right)\OPT_{\mathrm{SD}}\le\left(\frac{11}{2}+3\sqrt2\right)\OPT_{\mathrm{UD}},
\]
where the equality is \cref{eq:ratios} and the final inequality follows from $\OPT_{\mathrm{SD}}\le\OPT_{\mathrm{UD}}$. This proves the claimed approximation factor and the stronger bound stated after the theorem.
\end{proof}

\section{Running time}\label{sec:running-time}
We analyze \cref{alg:main} with $K=K_\lambda$, considering every retained tentative makespan and every partition into vehicle groups in $\mathcal G_K$.

\begin{theorem}[Running time]\label{thm:running-time}
Let $n=|V^\circ|$ and let $k$ be the number of vehicles. Each of the algorithms in \cref{thm:hettsp,thm:sd,thm:ud} can be implemented in
\begin{equation}\label{eq:running-time}
 O\!\left(n^4k^3+nk^4\right)
\end{equation}
time.
\end{theorem}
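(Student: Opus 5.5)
The plan is to count the number of (tentative makespan, partition) pairs and multiply by the cost of one construction, then add the preprocessing cost. Sorting the vehicles, computing the prefix sums $S_i$, and forming $\Theta$ take $O(nk\log(nk))$ time. For $\mathcal E_K$, there are $O(k^2)$ candidate breakpoints from \cref{eq:breakpoints}; sorting them and adding midpoints costs $O(k^2\log k)$. For each $\eta\in\mathcal E_K$, the group construction scans downward from $\bar b_t$ to find $\bar a_t$ in \cref{eq:partition-rule}, so one partition costs $O(k)$ time in total, because the candidate indices across groups are disjoint. Each value of \cref{eq:partition-rule} is evaluated in $O(1)$ using the prefix sums. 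Hence $\mathcal G_K$, of size $O(k^2)$, is built in $O(k^3)$ time. After sorting, the retained values of $\tau$ are a suffix of $\Theta$, and there are at most $nk$ of them.

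Next I bound one inner iteration, for fixed $\tau$ and partition, by $O(n^3+k)$. Computing $e_\tau(v)$ for all clients costs $O(n\log k)$ by binary search. Forming the layer sets $A_t$ costs $O(n+k)$. For each nonempty $A_t$ with $n_t=|A_t|$, Prim's algorithm gives $F_t$ in $O(n_t^2)$ time, and a minimum-weight perfect matching (Edmonds/Gabow) gives $J_t$ in $O(n_t^3)$ time. The Euler traversal and shortcutting take $O(n_t)$ time. Summing over $t$ gives $O(n^3)$, since $\sum_t n_t^3\le n^3$. For HetCVRP, I cite Appendix~\ref{app:tour-splitting} for the fact that the tour-splitting of \cref{lem:partition} runs in time polynomial in $n_t$ and produces $O(n_t)$ service events. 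I would aim for $O(n_t^2)$ here, since the unsplit version tries $n_t$ cyclic shifts of $O(n_t)$ each. Computing $P_t$, $\Gamma_t$, and $M$ takes $O(m)=O(k)$ time. The allocation is a single forward scan over the $O(n)$ service events using prefix lengths $p(z)$. Each vehicle costs $O(1)$ plus the events it absorbs, because \cref{eq:segment} is monotone in $z'$. So the allocation costs $O(n+k)$, and building the routes costs $O(n+k)$ as well.

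Multiplying gives $O(nk\cdot k^2\cdot(n^3+k))=O(n^4k^3+nk^4)$, which matches \cref{eq:running-time}. The preprocessing terms $O(k^3)$ and $O(nk\log(nk))$ are dominated, because $n\ge1$ and $\log(nk)\le nk$.

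The main obstacle is making sure that each inner iteration really costs $O(n^3+k)$, with no hidden factor $k$ attached to the $n^3$ term and no superlinear dependence on $k$. Two points need care. First, the empty groups and the unused vehicles must cost only $O(1)$ each. Second, the tour-splitting step for both HetCVRP versions must fit within $O(n_t^3)$; this is where I rely on the details recorded in Appendix~\ref{app:tour-splitting}. A possible saving would be to reuse the trees $F_t$ across partitions that share a layer set, but it is not needed for the stated bound.
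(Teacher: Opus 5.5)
Your proposal is correct and follows the paper's proof essentially step for step: $O(nk)$ tentative makespans times $O(k^2)$ partitions, each pair costing $O(n^3+k)$, dominated by the $O(|A_t|^3)$ matchings, with the $O(n^2)$ tour splitting from Appendix~\ref{app:tour-splitting} and a linear-time ordered allocation. Your sharper $O(k)$-per-partition downward scan is valid only because the set of indices $j$ satisfying \cref{eq:partition-rule} is upward-closed (the quantity $KS_j-\eta js_{\bar b_t+1}$ is nondecreasing for $j\le\bar b_t$), which you should state; the paper's cruder $O(k^4)$ bound for building $\mathcal G_K$ is dominated anyway, so this does not affect the theorem.
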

\begin{proof}
There are at most $nk$ tentative makespans. The set $\mathcal E_K$ contains $O(k^2)$ values. For each $\eta\in\mathcal E_K$, the construction in \cref{sec:vehicle-groups} forms at most $k$ vehicle groups and examines at most $k$ candidate indices in \cref{eq:partition-rule} for each group. Hence all partitions in $\mathcal G_K$ can be generated in $O(k^4)$ time and reused for every tentative makespan. For each tentative makespan, the slowest reachable vehicle of every client can be determined in $O(nk)$ time.

Fix a tentative makespan and a partition into vehicle groups, and let $p_t=|A_t|$. The layer client sets can be formed in $O(n+k)$ time, and $\sum_t p_t=n$. For a nonempty layer client set with $p$ clients, a minimum spanning tree in the complete graph can be computed in $O(p^2)$ time, while a minimum-length perfect matching on at most $p+1$ vertices can be found in $O(p^3)$ time \citep{Gabow2018}. The remaining operations needed to construct $T_t$ are linear in $p$. Therefore all tours for the fixed partition can be constructed in
\[
 O\!\left(\sum_t p_t^3\right)=O(n^3)
\]
time.

For either version of HetCVRP, the tour-splitting construction in Appendix~\ref{app:tour-splitting} takes $O(\sum_t p_t^2)=O(n^2)$ time. The total number of service events and edge traversals in the resulting closed walks is $O(n)$. Computing $P_t$, $\Gamma_t$, and $M$, performing the ordered allocation, and constructing the vehicle routes take $O(n+k)$ additional time.

Thus one tentative makespan and one partition into vehicle groups require $O(n^3+k)$ time. Since there are $O(nk)$ tentative makespans and $O(k^2)$ partitions in $\mathcal G_K$, there are $O(nk^3)$ such pairs. Their total running time is
\[
 O(nk^3(n^3+k))
 =
 O\!\left(n^4k^3+nk^4\right).
\]
The preprocessing time is dominated by this bound.
\end{proof}

\section{Conclusion}\label{sec:conclusion}
We gave polynomial-time approximation algorithms with factors $6$ for HetTSP, $\frac92+2\sqrt3$ for split-delivery HetCVRP, and $\frac{11}{2}+3\sqrt2$ for unsplit-delivery HetCVRP. The unsplit-delivery guarantee also holds relative to $\OPT_{\mathrm{SD}}$.

\paragraph{Use of generative AI.}
The author used OpenAI's ChatGPT and Anthropic's Claude in the preparation of this manuscript. The algorithmic framework was developed by the author. In an earlier version of the algorithm and its analysis, the author constructed the tour for each layer client set by applying an $\alpha$-approximation algorithm for TSP using only its approximation guarantee, together with a different choice of parameters for the vehicle groups, and obtained a $(4\alpha+1)$-approximation for HetTSP. In a conversation with ChatGPT, ChatGPT suggested replacing this step by a minimum-weight spanning tree with a degree bound at the depot, using modified weights on the edges incident to the depot, followed by a minimum-length perfect matching on its odd-degree vertices, and adjusting the parameters accordingly. With this construction and the adjusted parameters, the analysis yields the depot saving and hence the $6$-approximation for HetTSP. The author then applied the same technique to obtain the approximation factors for split-delivery and unsplit-delivery HetCVRP. The author subsequently found that the degree bound and the modified weights are unnecessary: the same bound holds for a minimum spanning tree in the original distances, so each tour can be constructed by the Christofides--Serdjukov algorithm, as presented here. ChatGPT and Claude were also used for language editing, presentation, and checking proofs. The author has verified all results and proofs and takes full responsibility for the content of this manuscript.

\appendix
\section{Details of the classical tour-splitting constructions}\label{app:tour-splitting}
We recall the classical constructions underlying \cref{lem:partition}, following the presentation in \citet[Theorem~5]{BTV} of the results of \citet{HRK,AG87}. We include the details needed to specify the client visits and delivered quantities used later in \cref{sec:allocation}.

\paragraph{Common setup.}
Let the clients in $X$ occur along $T$ in the order $v_1,\ldots,v_N$, and write $\bar q_j=q_{v_j}/Q$, $\Delta_0=0$, and $\Delta_j=\sum_{h=1}^j\bar q_h$. Represent the demand of $v_j$ by the interval $(\Delta_{j-1},\Delta_j)$. Choose an offset $\theta$ uniformly in $[0,1)$ and consider the points of $\theta+\mathbb Z$, avoiding offsets for which a marked point coincides with an endpoint $\Delta_j$.

\paragraph{Unsplit-delivery version.}
Call a client selected if its demand interval contains a marked point. Serve each selected client on a tour containing only that client and the depot. Each maximal consecutive sequence of unselected clients is served by one tour following their order along $T$. Since the corresponding demand intervals lie between two consecutive marked points, each such tour is capacity-feasible.

Let $Z$ be the set of selected clients. If $u$ and $w$ are the last and first clients of two consecutive tours serving unselected clients, and $z\in Z$ lies between them along $T$, then the triangle inequality gives
\[
 d(u,r)+d(r,w)\le \len(T[u,w])+2d(r,z).
\]
It follows that the tours serving unselected clients have total length at most $\len(T)+2\sum_{v\in Z}d(r,v)$. The single-client tours for the selected clients contribute another $2\sum_{v\in Z}d(r,v)$. Since client $v_j$ is selected with probability $\bar q_j$,
\[
 \mathbb E[\text{total length}]
 \le
 \len(T)+4\sum_{j=1}^N\bar q_jd(r,v_j)
 =
 \len(T)+2R_Q(X).
\]
Each client is visited exactly once.

\paragraph{Split-delivery version.}
Use the marked points to divide $(0,\Delta_N)$ into intervals of length at most one. For each such interval, construct a tour visiting, in their order along $T$, the clients whose demand intervals intersect it, and deliver to each client $Q$ times the length of the intersection. Each resulting tour is capacity-feasible. If a marked point lies inside the demand interval of a client $v$, that client's demand is split between the two adjacent tours; in the concatenated walk this introduces a return to the depot and a departure back to $v$, of total length $2d(r,v)$. Hence
\[
 \mathbb E[\text{total length}]
 \le
 \len(T)+2\sum_{j=1}^N\bar q_jd(r,v_j)
 =
 \len(T)+R_Q(X).
\]
Since $\bar q_j\le1$, each client is visited by at most two resulting tours.

\paragraph{Deterministic implementation.}
The random offset is used only to prove the length bounds. As in \citet[Theorem~5]{BTV}, it suffices to consider one representative offset from each interval determined by the residues $\Delta_j\bmod1$ and retain a construction of minimum total length. There are $O(N)$ such choices, and each can be evaluated in $O(N)$ time, giving an $O(N^2)$ implementation.

The resulting tours can be ordered according to their occurrences along $T$ and concatenated at the depot. In the unsplit-delivery version each client then has one service event, while in the split-delivery version it has at most two, with delivered quantities summing to its demand. Thus the construction produces the closed walk and service events used in \cref{sec:allocation}, with $O(N)$ visits and edge traversals.

\end{document}